\documentclass[11pt,reqno]{amsart}

\usepackage{amsmath, amsfonts, amssymb,  mathrsfs,  array, stmaryrd,  indentfirst, amsthm,  hyperref}

\usepackage[margin=1.0in]{geometry}
\usepackage{setspace}
\usepackage{color}
\onehalfspacing

\setlength{\parskip}{8pt}

\numberwithin{equation}{section}
\theoremstyle{plain}

\newtheorem{lemma}{Lemma}[section]
\newtheorem{theorem}{Theorem}[section]

\newtheorem{corollary}{Corollary}[section]
\newtheorem{assumption}{Assumption}[section]
\newtheorem{remark}{Remark}[section]

\makeatletter
\def\@setcopyright{}
\def\serieslogo@{}
\makeatother

\begin{document}
 
\author{Xiang Yu}
\address{Xiang Yu, Department of Mathematics, University of Michigan, 530 Church Street, Ann Arbor, MI 48109, USA}
\email{xymath@umich.edu}

\keywords{Consumption habit formation, Kalman-Bucy filtering, path-dependent stochastic control, Verification theorem}
\subjclass[2010]{91G10 (Primary); 93E11  93E20 (Secondary)}
\title[]{An Explicit Example of Optimal Portfolio-Consumption Choices with Habit Formation and Partial Observations.}

\begin{abstract}
We consider a model of optimal investment and consumption with both habit formation and partial observations in incomplete It\^{o} processes market. The investor chooses his consumption under the addictive habits constraint while only observing the market stock prices but not the instantaneous rate of return. Applying the Kalman-Bucy filtering theorem and the Dynamic Programming arguments, we solve the associated Hamilton-Jacobi-Bellman (HJB) equation explicitly for the path dependent stochastic control problem in the case of power utilities. We provide the optimal investment and consumption policies in explicit feedback forms using rigorous verification arguments.
\end{abstract}

\date{\today.}


\maketitle
\pagestyle{headings}



\section{Introduction}
Habit formation has become a popular choice for modeling preferences on consumption streams during recent years. It has been observed that the time separable property of von Neumann-Morgenstern utility is not consistent with some empirical experiments, for instance, the Equity Premium Puzzle. (See \cite{Mehra} and \cite{constantinides1988habit}). Therefore, the literature has been arguing that the past consumption pattern enforces a continuing impact on individual's current consumption decisions and the preference should depend on the consumption path. In particular, the linear habit formation preference $\mathbb{E}[\int_0^TU(t,c_t-Z_t)dt]$ has been widely accepted, where the index $Z_t$ stands for the accumulative consumption history. The utility function is decreasing in $Z_t$ which indicates that an increase in consumption today increases current utility but depresses all future utilities through the induced increase in future standards of living.

The continuous time utility maximization problem with habit formation in the complete It\^{o} processes market has been extensively studied in the past decades, see for instance, \cite{detemple92}, \cite{Schroder01072002}, \cite{Munk} and \cite{Eng09}. In incomplete markets, recently, this problem has been solved in the semimartingale market by \cite{Yu} and in the market with transaction costs and unbounded random endowment by \cite{Yu3}.

The contributions of the present work are two-fold. First, from the modeling perspective, we are considering some realistic information constraints to the investor. To be more precise, we are facing the case that the investor develops his consumption habits and meanwhile has only access to the public stock price information $\mathcal{F}_t^S=\sigma\{S_u:0\leq u\leq t\}$. Second, on the mathematical level, we reveal that this path-dependent stochastic control problem under the partial observation filtration $\mathcal{F}_t^S$ is actually easier than the case under full information. Indeed, it is well known that in incomplete markets with full observations, the work on the structure of the optimal strategies with habit formation is not promising. By restricting to the smaller filtration $\mathcal{F}_t^S$, however, the problem is very simple. We are able to solve the corresponding HJB equation fully explicitly. As a consequence, we can derive the $\mathcal{F}_t^S$-adapted optimal investment and consumption policies in feedback forms via rigorous verification arguments. Our analytical approach also allows us to avoid proving the Dynamic Programming Principle. 

Optimal investment problems under incomplete information have been studied by many authors recently, see among \cite{Lakner}, \cite{BS}, \cite{Brennan}, \cite{XiaY}, \cite{MR2648470} and \cite{MR2606929}. In this paper, we aim to combine the information constraint together with the addictive habit formation constraint and derive the unanticipated explicit forms of optimal strategies. The mathematical verification arguments can also be applied to other stochastic control problems in general.

The structure of the present paper is outlined as: Section $\ref{sec1}$ introduces the market model and the habit formation process. The utility maximization problem with addictive habit formation and partial observations is defined in Section $\ref{sec2}$. By applying the Kalman Bucy filtering theorem and Dynamic Programming arguments, we formally derive the HJB equation for the power utilities. We provide the decoupled form solution for this nonlinear PDE which reduces the algorithm to solving some auxiliary ODEs with constant coefficients. Section $\ref{sec3}$ contains rigorous proofs of the verification theorem. At last, four cases of fully explicit solutions of some auxiliary ODEs are presented in the Appendix $\ref{ApA}$.

\section{Market Model and Consumption Habit Formation}\label{sec1}
Given the probability space $(\Omega,\mathbb{F},\mathbb{P})$ with the filtration $\mathbb{F}=(\mathcal{F}_{t})_{0\leq{t}\leq{T}}$, which satisfies the usual conditions, we consider a financial market with one risk-free bond and one risky asset over a finite time horizon $[0, T]$. Without loss of generality, we assume that the bond process $S^0_t\equiv 1$, for all $t\in[0,T]$, following the standard change of num\'{e}raire.\\
\indent
The stock price $S_{t}$ is modeled as a diffusion process solving
\begin{equation}\label{stock}
dS_{t}=\mu_{t}S_{t}dt+\sigma_{S}S_{t}dW_t,\ \ \   0\leq{t}\leq{T},
\end{equation}
with $S_{0}=s>0$. Similar to \cite{BS} and \cite{Munk}, we assume that the drift process $\mu_{t}$ satisfies the Ornstein Uhlenbeck stochastic differential equation
\begin{equation}\label{mu}
d\mu_{t}=-\lambda(\mu_{t}-\bar{\mu})dt+\sigma_{\mu}dB_{t},\ \ \   0\leq{t}\leq{T}.
\end{equation}
Here, $(W_t)_{0\leq t\leq T}$ and $(B_t)_{0\leq t\leq T}$ are $(\mathcal{F}_{t})_{0\leq t\leq T}$-adapted Brownian motions with correlation coefficient $\rho\in[-1,1]$. The initial value of the drift process $\mu_{0}$ is assumed to be a $\mathcal{F}_{0}$-measurable Gaussian random variable, satisfying $\mu_{0}\thicksim\mathrm{N}(\eta_{0}, \theta_{0})$, which is independent of Brownian motions $W$ and $B$. We also assume that all coefficients $\sigma_{S}\geq 0,\lambda\geq 0,\bar{\mu},\sigma_{\mu}\geq{0}$ are constants.\\

\indent
At time $t\in{[0,\ T]}$, the investor chooses a consumption rate $c_{t}\geq{0}$ and decides the amounts $\pi_{t}$ of his wealth to invest in the risky asset, and the rest in the bond. The investor's total wealth process $X_{t}$ therefore follows the dynamics
\begin{equation}\label{wealth}
dX_{t}=(\pi_{t}\mu_{t}-c_{t})dt+\sigma_{S}\pi_{t}dW_{t},\ \ \  0\leq{t}\leq{T},
\end{equation}
with the initial wealth $X_{0}=x_{0}>0$.\\
\indent
In this paper, we adopt the notation $Z_{t}\triangleq Z(c)_t$ as \textquotedblleft Habit Formation\textquotedblright\ process or  \textquotedblleft the standard of living\textquotedblright\ process which describes the consumption habits level. We assume that the accumulative index $Z_t$ satisfies the recursive equation (see \cite{detemple92})
\begin{equation}\label{habit}
dZ_{t}=(\delta(t)c_{t}-\alpha(t)Z_{t})dt,\ \ \  0\leq{t}\leq{T},
\end{equation}
where \ $Z_{0}=z_{0}\geq{0}$ is called the \textit{initial habit}. Equivalently, we have
\begin{equation}
Z_t=z_0e^{-\int_0^t\alpha(u)du}+\int_0^t\delta(u)e^{-\int_u^t \alpha(v)dv}c_udu,\ \ \ 0\leq t\leq T.
\end{equation}
and it is the exponentially weighted average of the initial habit and the past consumption. Here, the discounted factors $\alpha(t)$ and $\delta(t)$ measure, respectively, the persistence of the past level and the intensity of consumption history. we assume $\alpha(t)$ and $\delta(t)$ to be nonnegative continuous functions.

In this paper, we only consider the case of \textit{addictive habits}, i.e., we require that the investor's current consumption strategies shall never fall below the standard of living level, 
\begin{equation}
c_t\geq Z_t,\ \ \ 0\leq t\leq T,\ \ \text{a.s.}.
\end{equation}

\section{Utility Maximization under Partial Observations}\label{sec2}

\subsection{Dynamic Programming Arguments}
From now on, we make the assumption that the investor can only observe the stock price process $S_{t}$ which is available to the public. The drift process $\mu_{t}$ and the information of Brownian motions $(W_{t})_{0\leq t\leq T}$ and $(B_{t})_{0\leq t\leq T}$ are unknown. The stochastic control problem under incomplete information will be modeled by requiring the investment strategy $(\pi_{t})_{0\leq{t}\leq{T}}$ and consumption policy $(c_{t})_{0\leq{t}\leq{T}}$ to be only adapted to the partial observation filtration $(\mathcal{F}_{t}^{S})_{0\leq t\leq T}$, which is strictly smaller than the background full information $\mathbb{F}=(\mathcal{F}_{t})_{0\leq t\leq T}$.\\
\indent
Applying the Kalman-Bucy filtering theorem, we can define the \textit{Innovation Process} by
\begin{equation}\label{innovaprocss}
d\hat{W}_{t}\triangleq \frac{1}{\sigma_{S}}\Big[(\mu_{t}-\hat{\mu}_{t})dt+\sigma_{S}dW_{t}\Big]=\frac{1}{\sigma_{S}}\Big(\frac{dS_{t}}{S_{t}}-\hat{\mu}_{t}dt\Big),\ \ \  0\leq{t}\leq{T},
\end{equation}
which is a Brownian motion under the partial observations filtration $\mathcal{F}^{S}_{t}$ and the process $\hat{\mu}_{t}=\mathbb{E}\Big[\mu_{t}\Big|\mathcal{F}^{S}_{t}\Big]$ is the conditional estimation of process $\mu_{t}$.\\
\indent
Moreover, by the Kalman-Bucy filtering theorem, the process $\hat{\mu}_{t}$ satisfies the linear SDE
\begin{equation}\label{newmu}
d\hat{\mu}_{t}= -\lambda(\hat{\mu}_{t}-\bar{\mu})dt+\Big(\frac{\hat{\Omega}_{t}+\sigma_{S}\sigma_{\mu}\rho}{\sigma_{S}}\Big)d\hat{W}_{t},\ \ \  0\leq{t}\leq{T},
\end{equation}
with $\hat{\mu}_{0}=\mathbb{E}\Big[\mu_{0}\Big|\mathcal{F}^{S}_{0}\Big]=\eta_{0}$. 
\indent
In addition, the conditional variance $\hat{\Omega}_{t}=\mathbb{E}\Big[(\mu_{t}-\hat{\mu}_{t})^{2}\Big|\mathcal{F}^{S}_{t}\Big]$ satisfies the deterministic Riccati ODE
\begin{equation}\label{oooga}
d\hat{\Omega}_{t}=\Big[-\frac{1}{\sigma_{S}^{2}}\hat{\Omega}_{t}^{2}+\Big(-\frac{2\sigma_{\mu}\rho}{\sigma_{S}}-2\lambda\Big)\hat{\Omega}_{t}+(1-\rho^{2})\sigma_{\mu}^{2}\Big]dt,\ \ \  0\leq{t}\leq{T},
\end{equation}
with $\hat{\Omega}(0)=\mathbb{E}\Big[(\mu_{0}-\eta)^{2}\Big|\mathcal{F}^{S}_{0}\Big]=\theta_{0}$, which has an explicit solution
\begin{equation}\label{newOmega}
\hat{\Omega}_{t}=\hat{\Omega}(t;\theta_{0})=\sqrt{k}\sigma_{S}\frac{k_{1}\exp(2(\frac{\sqrt{k}}{\sigma_{S}})t)+k_{2}}{k_{1}\exp(2(\frac{\sqrt{k}}{\sigma_{S}})t)-k_{2}}-(\lambda+\frac{\sigma_{\mu}\rho}{\sigma_{S}})\sigma_{S}^{2},\ \ 0\leq{t}\leq{T},
\end{equation}
and
\begin{equation}
\begin{split}
k & =\lambda^{2}\sigma^{2}_{S}{S}+2\sigma_{S}\sigma_{\mu}\lambda\rho+\sigma^{2}_{\mu},\\
k_{1} & =\sqrt{k}\sigma_{S}+(\lambda\sigma^{2}_{S}+\sigma_{S}\sigma_{\mu}\rho)+\theta_{0},\\
k_{2} & =-\sqrt{k}\sigma_{S}+(\lambda\sigma^{2}_{S}+\sigma_{S}\sigma_{\mu}\rho)+\theta_{0}.\nonumber
\end{split}
\end{equation}
\indent
It is easy to see that $\hat{\Omega}(t)$ converges monotonically to the value
\begin{equation}\label{stedst}
\theta^{\ast}=\sigma_{S}\sqrt{\lambda^{2}\sigma^{2}_{S}+2\sigma_{S}\sigma_{\mu}\lambda\rho+\sigma^{2}_{\mu}}-(\lambda\sigma^{2}_{S}+\sigma_{S}\sigma_{\mu}\rho)>0
\end{equation}
as time $t\rightarrow+\infty$, which we call as \textit{steady state learning} (see also \cite{Brennan}). This convergence property of $\hat{\Omega}(t)$ tells us the precision of the drift estimate goes from an initial condition to a steady state in the long time run, and after large time $T$, new return observations contribute to updating the estimated value of the state variable, but seldom reduce the variance of the estimation error. By the evolution of Riccati ODE $(\ref{oooga})$, we obtain that the monotone solution $\hat{\Omega}(t)$ on $(0,\infty)$ has the bounds
\begin{equation}
\min(\theta_{0},\theta^{\ast})\leq\hat{\Omega}(t)\leq \max(\theta_{0},\theta^{\ast}),\ \ \ \forall t\geq 0.
\end{equation}
Notice that $\min(\theta_{0},\theta^{\ast})$ and $\max(\theta_{0},\theta^{\ast})$ are independent of time $t$, therefore we can later provide some assumptions on the market coefficients independent of time to ensure that the verification results hold.

Under the observation filtration $(\mathcal{F}^{S}_{t})_{0\leq t\leq T}$, the stock price dynamics $(\ref{stock})$ can be rewritten by
\begin{equation}\label{newstocks}
dS_{t}=\hat{\mu}_{t}S_{t}dt+\sigma_{S}S_{t}d\hat{W}_{t},\ \ \  0\leq{t}\leq{T}.
\end{equation}

The habit formation process $Z_{t}$ still satisfies the ODE
\begin{equation}\label{newhabit}
dZ_{t}=(\delta(t)c_{t}-\alpha(t)Z_{t})dt,\ \ \  0\leq{t}\leq{T},\nonumber
\end{equation}
however, the consumption policy $c_t$ is now $\mathcal{F}^{S}_{t}$-progressively measurable.

Moreover, under $\mathcal{F}^{S}_{t}$-progressively measurable portfolio $\pi_t$ and consumption rate $c_t$, the wealth process dynamics $(\ref{wealth})$ can be rewritten as
\begin{equation}\label{newwealth}
dX_{t}=(\pi_{t}\hat{\mu}_{t}-c_{t})dt+\sigma_{S}\pi_{t}d\hat{W}_{t},\ \ \  0\leq{t}\leq{T}.
\end{equation}

The investment and consumption pair process $(\pi_t,c_t)$ is said to be in the \textbf{Admisisble Control Space}, denoted by $\mathcal{A}$, if it is $\mathcal{F}_t^S$-progressively measurable, and satisfies the integrability conditions
\begin{equation}
\int_0^T\pi_t^2 dt<+\infty, \ \ \text{a.s.}\ \ \text{and}\ \ \ \int_0^Tc_tdt<+\infty,\ \ \text{a.s.}
\end{equation}
with the addictive habit constraint that $c_t\geq Z_t$, $0\leq t\leq T$. Moreover, no bankruptcy is allowed, i.e., the investor's wealth remains nonnegative: $X_t\geq 0$, $0\leq t\leq T$.

Our goal is to maximize the consumption with habit formation and the terminal wealth with power utility preference under the partial observations filtration $\mathcal{F}^{S}_{t}$
\begin{equation}\label{problm}
v(x_{0},z_{0},\eta_{0},\theta_{0})=\underset{\pi,c\in\mathcal{A}}\sup\mathbb{E}\Big[\int_{0}^{T}\frac{(c_{s}-Z_{s})^{p}}{p}ds+\frac{(X_{T})^{p}}{p}\Big],
\end{equation}
where we take the risk aversion coefficient $p<1$ and $p\neq 0$.\\
\indent
In this paper, we aim to first solve the Dynamic Programming equation analytically and then perform a rigorous verification argument. Therefore, there is no need to either define the value function at later times or to prove the Dynamic Programming Principle involving some complicated measurable selection arguments. 

To this end, we look for a smooth function $\tilde{v}(t,x,z,\eta,\theta)$ defined on an appropriate domain such that the process
\begin{equation}
\int_{0}^{t}\frac{(c_{s}-Z_{s})^{p}}{p}ds+\tilde{v}(t,X_{t},Z_{t},\hat{\mu}_{t},\hat{\Omega}_{t}),\ \ \ \forall 0\leq t\leq T,\nonumber
\end{equation}
is a local supermartingale for each admissible control $(\pi_{t},c_{t})\in\mathcal{A}$ and a local martingale for the optimal feedback control $(\pi_{t}^{\ast},c_{t}^{\ast})\in\mathcal{A}$.

Since the conditional variance process $\hat{\Omega}_{t}=\hat{\Omega}(t,\theta_{0})$ is actually a deterministic function of time. We can therefore set the variable $\theta$ in the definition of $\hat{v}$ by a deterministic function $\theta=\theta(t,\theta_{0})$ depending on the parameter $\theta_{0}$ to reduce the dimension of the function $\tilde{v}$, i.e., the variable $\theta(t;\theta_{0})$ is absorbed by the variable $t$. Hence, we can define the function $V(t,x,z,\eta;\theta_{0})$ as
\begin{equation}
V(t,x,z,\eta;\theta_{0})\triangleq \tilde{v}(t,x,z,\eta,\theta(t,\theta_{0})),\nonumber
\end{equation}
and our target above can be simplified into finding a smooth enough function $V(t,x,z,\eta;\theta_{0})$ on some appropriate domain , denoted by $V(t,x,z,\eta)$ for simplicity. We expect that
\begin{equation}
\int_{0}^{t}\frac{(c_{s}-Z_{s})^{p}}{p}ds+V(t,X_{t},Z_{t},\hat{\mu}_{t}),\ \ \ \forall 0\leq t\leq T,\nonumber
\end{equation}
is a local supermartingale for each admissible control $(\pi_{t},c_{t})\in\mathcal{A}$ and a local martingale for the optimal feedback control $(\pi_{t}^{\ast},c_{t}^{\ast})\in\mathcal{A}$, for each fixed initial value $\hat{\Omega}(0)=\theta_{0}$.

By the definition of $V(t,x,z,\eta)$ and It\^{o}'s formula, we can formally derive the HJB equation as
\begin{equation}\label{HJBeqn}
\begin{split}
& V_{t}- \alpha(t)zV_{z}-\lambda(\eta-\bar{\mu})V_{\eta}+\frac{\Big(\hat{\Omega}(t)+\sigma_{S}\sigma_{\mu}\rho\Big)^{2}}{2\sigma^{2}_{S}}V_{\eta\eta}+\max_{c\in\mathcal{A}}\Big[-cV_{x}+c\delta(t)V_{z}\\
& + \frac{(c-z)^{p}}{p}\Big]+\max_{\pi\in\mathcal{A}}\Big[\pi\eta V_{x}+\frac{1}{2}\sigma^{2}_{S}\pi^{2}V_{xx}+V_{x\eta}\Big(\hat{\Omega}(t)+\sigma_{S}\sigma_{\mu}\rho\Big)\pi\Big]=0,
\end{split}
\end{equation}
with the terminal condition $V(T,x,z,\eta)=\frac{x^{p}}{p}$.\\

\subsection{The Decoupled Reduced Form Solution}\ \\
\indent
If $V(t,x,z,\eta)$ is smooth enough, the first order condition gives 
\begin{equation}\label{maxipolicy}
\begin{split}
\pi^{\ast}(t,x,z,\eta) & =\frac{-\eta V_{x}-\Big(\hat{\Omega}(t)+\sigma_{S}\sigma_{\mu}\rho\Big)\mathit{V_{x\eta}}}{\sigma^{2}_{S}V_{xx}},\\
c^{\ast}(t,x,z,\eta) & =z+\Big(V_{x}-\delta(t)V_{z}\Big)^{\frac{1}{p-1}}.
\end{split}
\end{equation}
which achieve the maximum over control policies $\pi$ and $c$ respectively. Moreover, we expect that the smooth solution $V(t,x,z,\eta)$ of the HJB equation at time $t=0$ is actually the value function, i.e., $V(0,x_{0},z_{0},\eta_{0};\theta_{0})=v(x_{0},z_{0},\eta_{0},\theta_{0})$. Due to the homogeneity property of the power utility function and the linearity of dynamics $(\ref{newwealth})$ and $(\ref{newhabit})$ for $X_{t}$ and $Z_{t}$ respectively, it's easy to see that if $V(t,x,z,\eta)$ is finite, then it is homogeneous in $(x,z)$ with degree $p$, i.e., for any $x>0$, $z\geq 0$ and the positive constant $k$, we have $V(t,kx,kz,\eta)=k^{p}V(t,x,z,\eta)$. It therefore makes sense for us to seek the value function of the form:
\begin{equation}
V(t,x,z,\eta)=\frac{\Big[(x-m(t,\eta)z)\Big]^{p}}{p}M(t,\eta)\nonumber
\end{equation}
for some test functions $m(t,\eta)$ and $M(t,\eta)$ to be determined. By the virtue of $V(T)=\frac{x^{p}}{p}$, we will require $M(T,\eta)=1$ and $m(T,\eta)=0$.\\
\indent
After the direct substitutions and dividing the equation on both sides by $(x-m(t,\eta)z)^{p}$, the HJB equation $(\ref{HJBeqn})$ becomes
\begin{equation}\label{hjb333}
\begin{split}
& \frac{\Big[f(t,m)z+\lambda(\eta-\bar{\mu})m_{\eta}-\frac{1}{2\sigma^{2}_{S}}(\hat{\Omega}(t)+\sigma_{S}\sigma_{\mu}\rho)^{2}m_{\eta\eta}+\frac{\eta}{\sigma^{2}_{S}}(\hat{\Omega}(t)+\sigma_{S}\sigma_{\mu}\rho)m_{\eta}\Big]z}{x-m(t,\eta)z}
M
\\
& +\frac{1}{p}M_{t}-\frac{\lambda(\eta-\bar{\mu})}{p}M_{\eta}+\frac{(\hat{\Omega}(t)+\sigma_{S}\sigma_{\mu}\rho)^{2}}{2p\sigma^{2}_{S}}M_{\eta\eta}-\frac{\eta(\hat{\Omega}(t)+\sigma_{S}\sigma_{S}\rho)}{(p-1)\sigma^{2}_{S}}M_{\eta} \\
& -\frac{\eta^{2}}{2(p-1)\sigma^{2}_{S}}M-\frac{(\hat{\Omega}(t)+\sigma_{S}\sigma_{\mu}\rho)^{2}}{2(p-1)\sigma^{2}_{S}}\frac{M_{\eta}^{2}}{M}-\frac{p-1}{p}\Big(1+\delta(t)m(t,\eta)\Big)^{\frac{p}{p-1}}M^{\frac{p}{p-1}}=0.
\end{split}
\end{equation}
where we set
\begin{equation}
f(t,m)=-m_{t}+\alpha(t)m-(1+\delta(t)m).\nonumber
\end{equation}
\indent
Since the Equation $(\ref{hjb333})$ above holds for all values of $x$ and $z$, we must have
\begin{equation}\label{ffunc}
f(t,m)+\lambda(\eta-\bar{\mu})m_{\eta}-\frac{1}{2\sigma_S^2}(\hat{\Omega}(t)+\sigma_S\sigma_{\mu}\rho)^2m_{\eta\eta}+\frac{\eta}{\sigma_S^2}(\hat{\Omega}(t)+\sigma_S\sigma_{\mu}\rho)m_{\eta}=0,\nonumber
\end{equation}
with $m(T, \eta)=0$. We propose to set the unknown priori function $m(t,\eta)=m(t)$ as a deterministic function in time $t$ with the terminal condition $m(T)=0$ and hence $f(t,m)=0$, which is equivalent to
\begin{equation}\label{www}
m(t)=\int_{t}^{T}\exp\Big(\int_{t}^{s}(\delta(v)-\alpha(v))dv\Big)ds.\ \ \ 0\leq{t}\leq{T}.
\end{equation}
Clearly, $m(t)$ solves the above equation $(\ref{ffunc})$.

By substituting $m(t)$ into the equation $(\ref{hjb333})$, it is simplified as
\begin{equation}\label{equ4}
\begin{split}
& M_{t}+\frac{p\eta^{2}}{2(1-p)\sigma_{S}^{2}}M+\frac{\Big(\hat{\Omega}(t)+\sigma_{S}\sigma_{\mu}\rho\Big)^{2}}{2\sigma^{2}_{S}}M_{\eta\eta}+(1-p)\Big(1+\delta(t)m(t)\Big)^{\frac{p}{p-1}}M^{\frac{p}{p-1}}
\\
& +\Big[-\lambda(\eta-\bar{\mu})+\frac{\eta(\hat{\Omega}(t)+\sigma_{S}\sigma_{\mu}\rho)p}{(1-p)\sigma^{2}_{S}}\big]M_{\eta}+\frac{\Big(\hat{\Omega}(t)+\sigma_{S}\sigma_{\mu}\rho\Big)^{2}p}{2(1-p)\sigma^{2}_{S}}\frac{M_{\eta}^{2}}{M}=0.
\end{split}
\end{equation}
\indent
Now in order to solve the above nonlinear PDE $(\ref{equ4})$, we can set the power transform as
\begin{equation}\label{MN}
M(t,\eta)=N(t,\eta)^{1-p}
\end{equation}
This idea of power transform was first introduced in \cite{MR1807876}.\\
\indent
And the nonlinear PDE $(\ref{equ4})$ for $M(t,\eta)$ reduces to the linear parabolic PDE for $N(t,\eta)$ as:

\begin{equation}\label{equ5}
\begin{split}
& N_{t}+\frac{p\eta^{2}}{2(1-p)^{2}\sigma^{2}_{S}}N(t,\eta)+\frac{\Big(\hat{\Omega}(t)+\sigma_{S}\sigma_{\mu}\rho\Big)^{2}}{2\sigma^{2}_{S}}N_{\eta\eta}+\Big(1+\delta(t)m(t)\Big)^{\frac{p}{p-1}}
\\
& +\Big[-\lambda(\eta-\bar{\mu})+\frac{\eta\Big(\hat{\Omega}(t)+\sigma_{S}\sigma_{\mu}\rho\Big)p}{(1-p)\sigma^{2}_{S}}\Big]N_{\eta}(t,\eta)=0 
\end{split}
\end{equation}
with $N(T,\eta)=1$.\\
\indent
For the above linear PDE $(\ref{equ5})$ of $N(t,\eta)$, we can further solve it explicitly by
\begin{equation}\label{eqnN}
\begin{split}
N(t,\eta)= & \int_{t}^{T}\Big(1+\delta(s)m(s)\Big)^{\frac{p}{p-1}}\exp\Big(A(t,s)\eta^{2}+B(t,s)\eta+C(t,s)\Big)ds\\
& +\exp\Big(A(t,T)\eta^{2}+B(t,T)\eta+C(t,T)\Big),
\end{split}
\end{equation}
where for $0\leq{t}\leq{s}\leq{T}$, $A(t,s)$, $B(t,s)$ and $C(t,s)$ satisfy the following ODEs:
\begin{equation}\label{odeA}
A_{t}(t,s)+\frac{p}{2(1-p)^{2}\sigma^{2}_{S}}+2\Big[-\lambda+\frac{p(\hat{\Omega}(t)+\sigma_{S}\sigma_{\mu}\rho)}{\sigma^{2}_{S}(1-p)}\Big]A(t,s)+\frac{2(\hat{\Omega}(t)+\sigma_{S}\sigma_{\mu}\rho)^{2}}{\sigma^{2}_{S}}A^{2}(t,s)=0;
\end{equation}
\begin{equation}\label{odeB}
B_{t}(t,s)+\Big[-\lambda+\frac{p(\hat{\Omega}(t)+\sigma_{S}\sigma_{\mu}\rho)}{\sigma^{2}_{S}(1-p)}\Big]B(t,s)+2\lambda\bar{\mu}A(t,s)+\frac{2(\hat{\Omega}(t)+\sigma_{S}\sigma_{\mu}\rho)^{2}}{\sigma^{2}_{S}}A(t,s)B(t,s)=0;\ \
\end{equation}
\begin{equation}\label{odeC}
C_{t}(t,s)+\lambda\bar{\mu}B(t,s)+\frac{\Big(\hat{\Omega}(t)+\sigma_{S}\sigma_{\mu}\rho\Big)^{2}}{2\sigma^{2}_{S}}\Big(B^{2}(t,s)+2A(t,s)\Big)=0;\ \ \ \ \ \ \ \ \ \ \ \ \ \ \ \ \
\end{equation}
with terminal conditions:   $A(s,s)=B(s,s)=C(s,s)=0$.\\

We remark that the above ODEs are similar to the ODEs obtained by \cite{BS} for terminal wealth optimization problem with partial observations in which an insightful observation is made that we can solve the above $3$ ODEs with time $t$ dependent coefficients by solving the following $5$ auxiliary ODEs with constant coefficients, see section $4$ of \cite{BS} for the detail proof.
\begin{lemma}\label{brene}
For $0\leq t\leq s\leq T$, consider the following auxiliary ODEs for $a(t,s)$, $b(t,s)$, $c(t,s)$, $f(t,s)$ and $g(t,s)$:
\begin{equation}\label{eqna}
a_{t}=-\frac{2(1-p+p\rho^{2})}{1-p}\sigma^{2}_{\mu}a^{2}+\Big(2\lambda-\frac{2p\rho\sigma_{\mu}}{(1-p)\sigma_{S}}\Big)a-\frac{p}{2(1-p)\sigma^{2}_{S}},
\end{equation}
\begin{equation}\label{eqnb}
b_{t}=-\frac{2(1-p+p\rho^{2})}{1-p}\sigma^{2}_{\mu}ab-2\lambda\bar{\mu}a+\Big(\lambda-\frac{p\rho\sigma_{\mu}}{(1-p)\sigma_{S}}\Big)b,
\end{equation}
\begin{equation}\label{eqnc}
c_{t}=-\sigma^{2}_{\mu}a-\frac{(1-p+p\rho^{2})\sigma^{2}_{\mu}}{2(1-p)}b^{2}(t)-\lambda\bar{\mu}b,
\end{equation}
\begin{equation}\label{eqnL}
f_{t}=-2(1-\rho^{2})\sigma^{2}_{\mu}f^{2}+2\frac{\lambda\sigma_{S}+\rho\sigma_{\mu}}{\sigma_{S}}f+\frac{1}{2\sigma^{2}_{S}},
\end{equation}
\begin{equation}\label{eqnF}
g_{t}=\sigma^{2}_{\mu}(1-\rho^{2})(f-a),
\end{equation}
with the terminal conditions $a(s,s)=b(s,s)=c(s,s)=f(s,s)=g(s,s)=0$. If we adopt the convention $\frac{0}{0}=0$ and consider the functions defined by
\begin{equation}
\tilde{A}(t,s)\triangleq \frac{a(t,s)}{(1-p)\Big(1-2a(t,s)\hat{\Omega}(t)\Big)},\nonumber
\end{equation}
\begin{equation}
\tilde{B}(t,s)\triangleq \frac{b(t,s)}{(1-p)\Big(1-2a(t,s)\hat{\Omega}(t)\Big)},\nonumber
\end{equation}
\begin{equation}
\begin{split}
\tilde{C}(t,s) \triangleq & \frac{1}{1-p}\Big[c(t,s)+\frac{\hat{\Omega}(t)}{\Big(1-2a(t,s)\hat{\Omega}(t)\Big)}b^{2}(t,s)-\frac{1-p}{2}\log\Big(1-2a(t,s)\hat{\Omega}(t)\Big)\\
& -\frac{p}{2}\log\Big(1-2f(t,s)\hat{\Omega}(t)\Big)-pg(t,s)\Big],\nonumber
\end{split}
\end{equation}
the following equivalence results hold
\begin{equation}\label{eeeq}
A(t,s)=\tilde{A}(t,s),\ B(t,s)=\tilde{B}(t,s),\ C(t,s)=\tilde{C}(t,s),\ \ \ 0\leq t\leq s\leq T.
\end{equation}
\end{lemma}

We can further find fully explicit forms for $a(t,s)$, $b(t,s)$, $c(t,s)$, $f(t,s)$ and $g(t,s)$. We list all four different cases of explicit solutions in the Appendix $\ref{ApA}$ depending on the risk aversion coefficient $p$ and the market coefficients $\sigma_{S}$, $\sigma_{\mu}$, $\lambda$ and $\rho$. By simple substitutions, we can therefore solve the ODEs $(\ref{odeA})$, $(\ref{odeB})$, $(\ref{odeC})$ for $A(t,s)$, $B(t,s)$ and $C(t,s)$ fully explicitly.

For $t\in[0,T]$, $\eta\in(-\infty,+\infty)$, we can define the \textit{effective domain} for the pair $(x,z)$ by
\begin{equation}
(x,z)\in\ \mathbb{D}_{t}=\{(x',z')\in \ (0,+\infty)\times[0,+\infty);\ x'\geq m(t)z' \},\ 0\leq{t}\leq{T}.
\end{equation}
The function
\begin{equation}\label{solut}
\begin{split}
V(t,x,z,\eta)= & \Big[\int_{t}^{T}\Big(1+\delta(s)m(s)\Big)^{\frac{p}{p-1}}\exp\Big(A(t,s)\eta^{2}+B(t,s)\eta+C(t,s)\Big)ds\\
& +\exp\Big(A(t,T)\eta^{2}+B(t,T)\eta+C(t,T)\Big)\Big]^{1-p}\frac{[(x-m(t)z)]^{p}}{p}
\end{split}
\end{equation}
is well defined on $[0,T]\times \mathbb{D}_{t}\times \mathbb{R}$ and it's the classical solution of the HJB equation $(\ref{HJBeqn})$, where $m(t)=\int_{t}^{T}\exp(\int_{t}^{s}(\delta(v)-\alpha(v))dv)ds$, and $A(t,s)$, $B(t,s)$, $C(t,s)$ are solutions of ODEs $(\ref{odeA})$, $(\ref{odeB})$, $(\ref{odeC})$. 

\begin{remark}
In our main result below, we want to verify that the above classical solution $V(t,x,z,\eta)$ at time $t=0$ equals the primal value function defined in $(\ref{problm})$, i.e., $V(0,x_{0},z_{0},\eta_{0};\theta_{0})=v(x_{0},z_{0},\eta_{0},\theta_{0})$. However, the effective domain of $V(t,x,z,\eta)$ motivates some constraints on the optimal wealth process $X_{t}^{\ast}$ and habit formation process $Z_{t}^{\ast}$. To see this, we have $V(t,x,z,\eta)=-\infty$ when $x<m(t)z$, which mandates that $X_{t}^{\ast}\geq m(t)Z_{t}^{\ast}$ for each $t\in[0,T]$ to ensure that the process $V(t,X_{t}^{\ast},Z_{t}^{\ast},\hat{\mu}_{t})$ is well defined. In particular, when $t=0$, we have to enforce the initial wealth-habit budget constraint that $x_{0}\geq m(0)z_{0}$.
\end{remark}

\begin{assumption}\label{ass111}
The risk aversion constant $1-p$ of the utility function satisfies
\begin{equation}
p<0.\nonumber
\end{equation}
\end{assumption}

\begin{assumption}\label{ass222}
The risk aversion constant $1-p$ of the utility function satisfies
\begin{equation}
0<p<1.\nonumber
\end{equation}
And the explicit functions $a(t,s)$, $b(t,s)$, $c(t,s)$, $f(t,s)$ and $g(t,s)$ solved in Lemma $\ref{brene}$ are all bounded and $1-a(t,s)\hat{\Omega}(t)\neq 0$ on $0\leq t\leq s\leq T$. (See Appendix \ref{ApA})

Moreover, we assume that
\begin{equation}
\frac{2p^2+p}{(1-p)^2}<\frac{\lambda^2\sigma_S^4}{4(\Theta+\sigma_S\sigma_{\mu}\rho)^2},
\end{equation}
where $\Theta\triangleq \max\{\theta_0, \theta^{\ast}\}$ and $\theta^{\ast}$ is defined in $(\ref{stedst})$. The upper bound $\bar{K}_1$ of $A(t,s)$ on $0\leq t\leq s\leq T$ satisfies
\begin{equation}\label{newc}
\bar{K}_1 <\frac{\lambda \sigma_S^2}{4(\Theta+\sigma_S\sigma_{\mu}\rho)^2}.
\end{equation}
\end{assumption}

\subsection{The Main Result}
\begin{theorem}[The Verification Theorem]\label{mainmain}
If the initial wealth-habit budget constraint $x_0>m(0)z_0$ holds, under either Assumption $\ref{ass111}$ or Assumption $\ref{ass222}$, the solution $(\ref{solut})$ of HJB equation equals the value function defined in $(\ref{problm})$:
\begin{equation}\label{veri}
V(0,x_{0},z_{0},\eta_{0};\theta_{0})=v(x_{0},z_{0},\eta_{0},\theta_{0}).
\end{equation}

Moreover, the optimal investment policy $\pi^{\ast}_{t}$ and optimal consumption policy $c^{\ast}_{t}$ are given in the feedback form: $\pi^{\ast}_{t}=\pi^{\ast}(t,X^{\ast}_{t},Z^{\ast}_{t},\hat{\mu}_{t})$ and $c^{\ast}_{t}=c^{\ast}(t,X^{\ast}_{t},Z^{\ast}_{t},\hat{\mu}_{t})$, $0\leq{t}\leq{T}$, where the function $\pi^{\ast}(t,x,z,\eta):[0,T]\times\mathbb{D}_{t}\times\mathbb{R}\longrightarrow\mathbb{R}$ is defined by:
\begin{equation}\label{optpi}
\pi^{\ast}(t,x,z,\eta)=\Big[\frac{\eta}{(1-p)\sigma^{2}_{S}}+\frac{\Big(\hat{\Omega}(t)+\sigma_{S}\sigma_{\mu}\rho\Big)}{\sigma^{2}_{S}}\frac{N_{\eta}(t,\eta)}{N(t,\eta)}\Big](x-m(t)z),\ 0\leq{t}\leq{T}.
\end{equation}
$c^{\ast}(t,x,z,\eta):[0,T]\times\mathbb{D}_{t}\times\mathbb{R}\longrightarrow\mathbb{R}_{+}$
is defined by:
\begin{equation}\label{optc}
c^{\ast}(t,x,z,\eta)=z+\frac{(x-m(t)z)}{\big(1+\delta(t)m(t)\big)^{\frac{1}{1-p}}N(t,\eta)},\ \ 0\leq{t}\leq{T}.
\end{equation}
\indent
The corresponding optimal wealth process $X^{\ast}_{t}$, for $0\leq{t}\leq{T}$, can also be found explicitly by
\begin{equation}\label{optiwealth}
X_{t}^{\ast}=(x_{0}-m(0)z_{0})\frac{N(t,\hat{\mu}_{t})}{N(0,\eta)}\exp\Big(\int_{0}^{t}\frac{(\hat{\mu}_{u})^{2}}{2(1-p)\sigma^{2}_{S}}du+\int_{0}^{t}\frac{\hat{\mu}_{u}}{(1-p)\sigma_{S}}d\hat{W}_{u}\Big)+m(t)Z^{\ast}_{t},
\end{equation}
where $m(t)$ and $N(t,\eta)$ are defined in $(\ref{www})$ and $(\ref{eqnN})$ respectively.
\end{theorem}

The complicated structure of feedback forms of optimal investments and consumption policies is the consequence of the time non-separability of the instantaneous utility with habit formation. We can see that  the portfolio/wealth ratio $\frac{\pi^{\ast}}{X^{\ast}}$ and consumption/wealth ratio $\frac{c^{\ast}}{X^{\ast}}$ are depending on the habit-formation/wealth ratio $\frac{Z^{\ast}}{X^{\ast}}$:
\begin{equation}
\frac{\pi^{\ast}_t}{X^{\ast}_t}=\Big[\frac{\hat{\mu}_{t}}{(1-p)\sigma^{2}_{S}}+\frac{\Big(\hat{\Omega}(t)+\sigma_{S}\sigma_{\mu}\rho\Big)}{\sigma^{2}_{S}}\frac{N_{\eta}(t,\hat{\mu}_{t})}{N(t,\hat{\mu}_{t})}\Big](1-m(t)\frac{Z^{\ast}_t}{X^{\ast}_t}),\nonumber
\end{equation}
and
\begin{equation}
\frac{c^{\ast}_t}{X_t^{\ast}}=\frac{1}{\big(1+\delta(t)m(t)\big)^{\frac{1}{1-p}}N(t,\hat{\mu}_{t})}+\Big(1-\frac{m(t)}{\big(1+\delta(t)m(t)\big)^{\frac{1}{1-p}}N(t,\hat{\mu}_{t})}\Big)\frac{Z_t^{\ast}}{X_t^{\ast}},\nonumber
\end{equation}
\indent
Moreover, based on the explicit structures, we can easily provide some quantitative analysis on the value function and the optimal strategies.
\begin{corollary}
We have the following basic properties:
\begin{itemize}
\item[(1)] The value function $V(t,x,z,\eta)$ is concave in $x$ and $z$. And $V(t,x,z,\eta)$ is increasing in $x$ and decreasing in $z$.

\item[(2)] The feedback form $\pi^{\ast}(t,x,z,\eta)$ is linear in $x$ and $z$. And if $\frac{\eta}{(1-p)\sigma^{2}_{S}}+\frac{\Big(\hat{\Omega}(t)+\sigma_{S}\sigma_{\mu}\rho\Big)}{\sigma^{2}_{S}}\frac{N_{\eta}(t,\eta)}{N(t,\eta)}>0$, $\pi^{\ast}$ is increasing in $x$ and decreasing in $z$ and similarly, if $\frac{\eta}{(1-p)\sigma^{2}_{S}}+\frac{\Big(\hat{\Omega}(t)+\sigma_{S}\sigma_{\mu}\rho\Big)}{\sigma^{2}_{S}}\frac{N_{\eta}(t,\eta)}{N(t,\eta)}<0$, $\pi^{\ast}$ is decreasing in $x$ and increasing in $z$. 

\item[(3)] The feedback form $c^{\ast}(t,x,z,\eta)$ is linear in $x$ and $z$ and $c^{\ast}$ is increasing in $x$. If $1-\frac{m(t)}{(1+\delta(t)m(t))^{\frac{1}{1-p}}N(t,\eta)}>0$, $c^{\ast}$ is increasing in $z$ and if $1-\frac{m(t)}{(1+\delta(t)m(t))^{\frac{1}{1-p}}N(t,\eta)}<0$, $c^{\ast}$ is decreasing in $z$.
\end{itemize}
\end{corollary}

\section{Proof of The Verification Theorem}\label{sec3}
We first show that the consumption constraint $c_{t}\geq Z_{t}$ implies the constraint on the controlled wealth process by the following lemma
\begin{lemma}\label{yu}
The admissible space $\mathcal{A}$ is not empty if and only if the initial budget constraint $x_{0}\geq m(0)z_{0}$ is fulfilled. Moreover, for each pair of investment and consumption policy $(\pi,c)\in\mathcal{A}$, the controlled wealth process $X^{\pi,c}_{t}$ satisfies the constraint
\begin{equation}\label{kyu}
X_{t}^{\pi, c}\geq m(t)Z_{t},\ \ \ 0\leq{t}\leq{T},
\end{equation}
where the deterministic function $m(t)$ is defined in $(\ref{www})$ and refers to the cost of subsistence consumption per unit of standard of living at time t.
\end{lemma}
\begin{proof}.
Let's first assume that $x_{0}\geq m(0)z_{0}$, we can always take $\pi_{t}\equiv 0$, and $c_{t}=z_{0}e^{\int_{0}^{t}(\delta(v)-\alpha(v))dv}$ for $t\in[0,T]$. It is easy to verify $X_{t}^{\pi,c}\geq 0$ and $c_{t}\equiv Z_{t}$ so that $(\pi,c)\in\mathcal{A}$, and hence $\mathcal{A}$ is not empty.\\
\indent
On the other hand, starting from $t=0$ with the wealth $x_{0}$ and the standard of living $z_{0}$, the addictive habits constraint $c_{t}\geq Z_{t},\ 0\leq{t}\leq{T}$ implies that the consumption must always exceed the \textit{subsistence consumption} $\bar{c}_{t}=Z(t;\bar{c}_{t})$ which satisfies
\begin{equation}\label{newappend2}
d\bar{c}_{t}=(\delta(t)-\alpha(t))\bar{c}_{t}dt,\ \ \bar{c}_{0}=z_{0},\ \ 0\leq{t}\leq{T}.
\end{equation}
Indeed, since $Z_{t}$ satisfies $dZ_{t}=(\delta_{t}c_{t}-\alpha_{t}Z_{t})dt$ with $Z_{0}=z\geq 0$, the constraint $c_{t}\geq Z_{t}$ implies that
\begin{equation}\label{newappend1}
dZ_{t}\geq (\delta_{t}Z_{t}-\alpha_{t}Z_{t})dt,\ \ \ Z_{0}=z_{0}.
\end{equation}
\indent
By $(\ref{newappend2})$ and $(\ref{newappend1})$, one can get
\begin{equation}
d(Z_{t}-\bar{c}_{t})\geq (\delta_{t}-\alpha_{t})(Z_{t}-\bar{c}_{t})dt,\ \ \ Z_{0}-\bar{c}_{0}=0,\nonumber
\end{equation}
from which we can derive that
\begin{equation}
e^{\int_{0}^{t}(\delta_{s}-\alpha_{s})ds}(Z_{t}-\bar{c}_{t})\geq 0, \ \ \ 0\leq t\leq T.
\end{equation}
It follows that $c_{t}\geq \bar{c}_{t}$, which is equivalent to
\begin{equation}\label{consccc}
c_{t}\geq z_{0} e^{\int_{0}^{t}(\delta(v)-\alpha(v))dv},\ \ 0\leq t\leq T.
\end{equation}
\indent
Define the exponential local martingale
\begin{equation}\label{stockdensityH}
\widetilde{H}_{t}=\exp\Big(-\int_{0}^{t}\frac{\hat{\mu}_{v}}{\sigma_{S}}d\hat{W}_{v}-\frac{1}{2}\int_{0}^{t}\frac{\hat{\mu}_{v}^{2}}{\sigma^{2}_{S}}dv\Big),\ \ 0\leq{t}\leq{T}.
\end{equation}
Since $\hat{\mu}_{t}$ follows the dynamics $(\ref{newmu})$, which is
\begin{equation}
\hat{\mu}_{t}=e^{-t\lambda}\eta+\bar{\mu}(1-e^{-t\lambda})+\int_{0}^{t}e^{\lambda(u-t)}\frac{\Big(\hat{\Omega}(u)+\sigma_{S}\sigma_{\mu}\rho\Big)}{\sigma_{S}}d\hat{W}_{u}.\nonumber
\end{equation}
Similar to the proof of Corollary $3.5.14$ and Corollary $3.5.16$ in \textit{Karatzas and Shreve} \cite{MR917065}, Bene\v{s}' condition implies $\widetilde{H}$ is a true martingale with respect to $(\Omega, \mathcal{F}^{S},\mathbb{P})$,\\
\indent
Now define the probability measure $\widetilde{\mathbb{P}}$ as
\begin{equation}
\frac{d\widetilde{\mathbb{P}}}{d\mathbb{P}}=\widetilde{H}_{T},\nonumber
\end{equation}
Girsanov theorem states that
\begin{equation}
\widetilde{W}_{t}\triangleq \hat{W}_{t}+\int_{0}^{t}\frac{\hat{\mu}_{v}}{\sigma_{S}}dv,\ \ \ 0\leq{t}\leq T\nonumber
\end{equation}
is a Brownian Motion under $(\widetilde{\mathbb{P}}, (\mathcal{F}^{S}_{t})_{0\leq{t}\leq T})$.\\
\indent
We can rewrite the wealth process dynamics by
\begin{equation}
X_{T}+\int_{0}^{T}c_{v}dv=x+\int_{0}^{T}\pi_{v}\sigma_{S}d\widetilde{W}_{v},\nonumber
\end{equation}
Since we have $X_{T}\geq{0}$, it's easy to see that $\int_{0}^{t}\pi_{v}\sigma_{S}d\widetilde{W}_{v}$ is a supermartingale under $(\Omega, \mathbb{F}^{S},\widetilde{\mathbb{P}})$. By taking the expectation under $\widetilde{\mathbb{P}}$, we have:
\begin{equation}
x_{0}\geq{}\widetilde{\mathbb{E}}\Big[\int_{0}^{T}c_{v}dv\Big].\nonumber
\end{equation}
Follow the inequality $(\ref{consccc})$, we will further have
\begin{equation}
x_{0}\geq z_{0}\widetilde{\mathbb{E}}\Big[\int_{0}^{T}\exp\Big(\int_{0}^{v}(\delta(u)-\alpha(u))du\Big)dv\Big].\nonumber
\end{equation}
Since $\delta(t)$ and $\alpha(t)$ are deterministic functions, we obtain that $x_{0}\geq m(0)z_{0}$.\\
\indent
In general, for $\forall t\in[0,T]$, follow the same procedure, we can take conditional expectation under filtration $\mathcal{F}^{S}_{t}$, and get
\begin{equation}
X_{t}\geq Z_{t}\widetilde{\mathbb{E}}\Big[\int_{t}^{T}\exp\Big(\int_{t}^{v}(\delta(u)-\alpha(u))du\Big)dv\Big|\mathcal{F}_{t}^{S}\Big].\nonumber
\end{equation}
Again since $\delta(t),\alpha(t)$ are deterministic, we get $X_{t}\geq m(t)Z_{t},\ \ 0\leq{t}\leq{T}$.
\end{proof}
\begin{remark}
The constraint on the controlled wealth process $X_{t}$ and the habit formation process $Z_{t}$ agrees with the effective domain $\{(x,z)\in(0,\infty)\times[0,\infty):x\geq m(t)z\}$ of the HJB equation for the values of $x$ and $z$. Aside from the consequence that the process $V(t,X_{t},Z_{t},\hat{\mu}_{t})$ is therefore well defined, it plays a critical role in our following proof of the verification theorem.
\end{remark}

We proceed to show the proof of the Theorem $\ref{mainmain}$. 

\subsection{The Case $\mathit{p<0}$}
\begin{proof}[(PROOF OF THEOREM $\ref{mainmain}$)]\ \\
\indent
For any pair of admissible control $(\pi_{t},c_{t})\in\mathcal{A}$, It\^{o}'s lemma gives
\begin{equation}\label{preeqn} 
d\Big[V(t,X_{t},Z_{t},\hat{\mu}_{t})\Big]=\Big[\mathcal{G}^{\pi_{t},c_{t}}V(t,X_{t},Z_{t},\hat{\mu}_{t})\Big]dt+\Big[V_{x}\sigma_{S}\pi_{t}+V_{\eta}\frac{\Big(\hat{\Omega}(t)+\sigma_{S}\sigma_{\mu}\rho\Big)}{\sigma_{S}}\Big]d\hat{W}_{t},
\end{equation}
where we define the process $\mathcal{G}^{\pi_{t},c_{t}}V(t,X_{t},Z_{t},\hat{\mu}_{t})$ by
\begin{equation}
\begin{split}
& \mathcal{G}^{\pi_{t},c_{t}}V(t,X_{t},Z_{t},\hat{\mu}_{t}) = V_{t}- \alpha(t)Z_{t}V_{z}-\lambda(\hat{\mu}_{t}-\bar{\mu})V_{\eta}+\frac{\Big(\hat{\Omega}(t)+\sigma_{S}\sigma_{\mu}\rho\Big)^{2}}{2\sigma^{2}_{S}}V_{\eta\eta}\\
& -c_{t}V_{x}+c_{t}\delta(t)V_{z}+\frac{(c_{t}-Z_{t})^{p}}{p}+\pi_{t}\hat{\mu}_{t}V_{x}+\frac{1}{2}\sigma^{2}_{S}\pi_{t}^{2}V_{xx}+V_{x\eta}\Big(\hat{\Omega}(t)+\sigma_{S}\sigma_{\mu}\rho\Big)\pi_{t}.\nonumber
\end{split}
\end{equation}
\indent
For any localizing sequence $\tau_{n}$, by integrating the equation $(\ref{preeqn})$ on $[0, \tau_{n}\wedge T]$ and taking the expectation, it follows that
\begin{equation}\label{inqV}
V(0,x_{0},z_{0},\eta_{0})\geq{\mathbb{E}\Big[\int_{0}^{\tau_{n}\wedge T}\frac{(c_{s}-Z_{s})^{p}}{p}ds\Big]+\mathbb{E}\Big[V(\tau_{n}\wedge T,X_{\tau_{n}\wedge T},Z_{\tau_{n}\wedge T},\hat{\mu}_{\tau_{n}\wedge T})\Big]}.
\end{equation}
\indent
Similar to the idea by \cite{Sirbu}, we consider the fixed pair of control $(\pi_{t},c_{t})\in\mathcal{A}=\mathcal{A}_{x_{0}}$, where we denote $\mathcal{A}_{x_{0}}$ as the admissible space with initial endowment $x_{0}$. For $\forall \epsilon >0$, it is clear that $\mathcal{A}_{x_{0}}\subseteq\mathcal{A}_{x_{0}+\epsilon}$, and $(\pi_{t},c_{t})\in \mathcal{A}_{x_{0}+\epsilon}$. Also it is easy to see that $X^{x_{0}+\epsilon}_{t}=X^{x_{0}}_{t}+\epsilon=X_{t}+\epsilon,\ \ \ 0\leq{t}\leq{T}$. Since the process $Z_{t}$ is defined using this consumption policy $c_{t}$, under the probability measure $\mathbb{P}_{x_{0},z_{0},\eta_{0}}$, we can obtain
\begin{equation}\label{equnV}
V(0,x_{0}+\epsilon,z_{0},\eta_{0}) \geq\mathbb{E}\Big[\int_{0}^{\tau_{n}\wedge T}\frac{(c_{s}-Z_{s})^{p}}{p}ds\Big]+\mathbb{E}\Big[V(\tau_{n}\wedge T,X_{\tau_{n}\wedge T}+\epsilon,Z_{\tau_{n}\wedge T},\hat{\mu}_{\tau_{n}\wedge T})\Big].
\end{equation}
\indent
Monotone Convergence Theorem first gives that
\begin{equation}\label{mct}
\lim_{n\rightarrow +\infty}\mathbb{E}\Big[\int_{0}^{\tau_{n}\wedge T}\frac{(c_{s}-Z_{s})^{p}}{p}ds\Big]=\mathbb{E}\Big[\int_{0}^{T}\frac{(c_{s}-Z_{s})^{p}}{p}ds\Big].
\end{equation}
\indent
For simplicity, let's denote $Y_{t}=\Big(X_{t}-m(t)Z_{t}\Big)$. The definition $(\ref{solut})$ implies that:
\begin{equation}
V(\tau_{n}\wedge T,X_{\tau_{n}\wedge T}+\epsilon,Z_{\tau_{n}\wedge T},\hat{\mu}_{\tau_{n}\wedge T})=\frac{1}{p}(Y_{\tau_{n}\wedge T}+\epsilon)^{p}\mathit{N}^{1-p}_{\tau_{n}\wedge T}.\nonumber
\end{equation}
\indent
Lemma $\ref{yu}$ gives $X_{t}\geq W(t)Z_{t}$ for $0\leq t\leq T$ under any admissible control $(\pi_{t},c_{t})$, we get that $Y_{\tau_{n}\wedge T}+\epsilon \geq\epsilon > 0,\ \ \forall 0\leq{t}\leq{T}$. Since also $p<0$, it follows that
\begin{equation}\label{ineqYYY}
\underset{n}{\sup}(Y_{\tau_{n}\wedge T}+\epsilon)^{p}<\epsilon^{p}<+\infty.
\end{equation}
\indent
Remark $\ref{remapp}$ states that $A(t,s)\leq{0}, \ \forall 0\leq{t}\leq{s}\leq{T}$. Also $m(s)$, $\delta(s)$ are continuous functions and hence bounded on $[0,T]$, moreover, when $p<0$, we have $1-a(t,s)\hat{\Omega}(t)>0$ and $1-f(t,s)\hat{\Omega}(t)>0$ as well as $a(t,s)$, $b(t,s)$, $c(t,s)$, $f(t,s)$ and $g(t,s)$ are all bounded for $0\leq t\leq s\leq T$. We deduce that the explicit solutions $B(t,s)$ and $C(t,s)$ are both bounded on $0\leq t\leq s\leq T$ and hence
\begin{equation}
N(0,\eta)\leq{k_{1}\exp(k\eta)} \ \  \textrm{for some large constants}  \ \ k,k_{1}>1.\nonumber
\end{equation}
It follows that there exist some constants $\bar{k},\bar{k}_{1}>1$ such that
\begin{equation}
\underset{n}{\sup}\ N_{\tau_{n}\wedge T}^{1-p}\leq\underset{t\in[0,T]}{\sup}\Big(k_{1}\exp\Big(k\hat{\mu}_{t}\Big)\Big)^{1-p}\leq\bar{k}_{1}\exp\Big(\bar{k}\underset{t\in[0,T]}{\sup}\hat{\mu}_{t}\Big).\nonumber
\end{equation}
\indent
The Ornstein Uhlenbeck diffusion $\hat{\mu}_t$ satisfies $(\ref{newmu})$ which is equivalent to
\begin{equation}
\hat{\mu}_{t}=e^{-t\lambda}\eta+\bar{\mu}(1-e^{-t\lambda})+\int_{0}^{t}e^{\lambda(u-t)}\frac{\Big(\hat{\Omega}(u)+\sigma_{S}\sigma_{\mu}\rho\Big)}{\sigma_{S}}d\hat{W}_{u}.\nonumber
\end{equation}
Hence, there exists positive constants $l$ and $l_{1}>1$ large enough, such that:
\begin{equation}
\underset{t\in[0,T]}{\sup}\hat{\mu}_{t}\leq{l+\underset{t\in[0,T]}{\sup}l_{1}\hat{W}_{t}},\ \ \ t\in[0,T].\nonumber
\end{equation}
Using the distribution of running maximum of the Brownian Motion, there exists some positive constants $\bar{l}>1$ and $\bar{l}_{1}$ such that
\begin{equation}\label{ineqNNN}
\mathbb{E}\Big[\underset{n}{\sup}N_{\tau_{n}\wedge T}^{1-p}\Big]\leq{\bar{l}_{1}\mathbb{E}\Big[\exp\big(\underset{t\in[0,T]}{\sup}\bar{l}\hat{B}_{t}\Big)\Big]}<+\infty.
\end{equation}
\indent
At last, by $(\ref{ineqYYY})$ and $(\ref{ineqNNN})$, we can conclude that
\begin{equation}
\mathbb{E}\Big[\ \underset{n}{\sup}\ V(\tau_{n}\wedge T,X_{\tau_{n}\wedge T}+\epsilon,Z_{\tau_{n}\wedge T},\hat{\mu}_{\tau_{n}\wedge T})\Big]<+\infty.\nonumber
\end{equation}
Dominated Convergence Theorem gives 
\begin{equation}
\begin{split}
\underset{n\rightarrow+\infty}{\lim}\mathbb{E}\Big[V(\tau_{n}\wedge T,X_{\tau_{n}\wedge T}+\epsilon,Z_{\tau_{n}\wedge T},\hat{\mu}_{\tau_{n}\wedge T})\Big] & =
\mathbb{E}\Big[\frac{1}{p}(Y_{T}+\epsilon)^{p}N(T,\hat{\mu}_{T})\Big]\\
& =\mathbb{E}\Big[\frac{(X_{T}+\epsilon)^{p}}{p}\Big]>\mathbb{E}\Big[\frac{X_{T}^{p}}{p}\Big].\nonumber
\end{split}
\end{equation}
Combining this with equation $(\ref{equnV})$ and since $(\pi_{t}, c_{t})\in\mathcal{A}$, it follows that
\begin{equation}
V(0,x_{0}+\epsilon,z_{0},\eta_{0};\theta_{0})\geq{\underset{\pi,c\in\mathcal{A}}{\sup}\mathbb{E}\Big[\int_{0}^{T}\frac{(c_{s}-Z_{s})^{p}}{p}ds+\frac{X_{T}^{p}}{p}\Big]}=v(x_{0},z_{0},\eta_{0},\theta_{0}).\nonumber
\end{equation}
Notice $V(t,x,z,\eta;\theta_{0})$ is continuous in variable $x$, and since $\epsilon>0$ is arbitrary, we can take the limit and deduce that
\begin{equation}
V(0,x_{0},z_{0},\eta_{0};\theta_{0})=\underset{\epsilon\rightarrow 0}{\lim}V(0,x_{0}+\epsilon,z_{0},\eta_{0})\geq{v(x_{0},z_{0},\eta_{0},\theta_{0})}.\nonumber
\end{equation}
\indent
On the other hand, for $\pi^{\ast}_{t}$ and $c^{\ast}_{t}$ defined by $(\ref{optpi})$ and $(\ref{optc})$ respectively, we first need to show that the SDE for wealth process:
\begin{equation}\label{sdeX}
dX_{t}^{\ast}=(\pi^{\ast}_{t}\mu_{t}-c^{\ast}_{t})dt+\sigma_{S}\pi^{\ast}_{t}d\hat{W}_{t},\ \ \ 0\leq{t}\leq{T},
\end{equation}
with initial condition $x_{0}>m(0)z_{0}$ admits a unique strong solution which satisfies the constraint $X_{t}^{\ast}>m(t)Z_{t}^{\ast},\ \forall 0\leq t\leq T$.\\
\indent
Denote $Y_{t}^{\ast}=X_{t}^{\ast}-m(t)Z^{\ast}_{t}$, It\^{o}'s lemma implies that
\begin{equation}\label{eqnY}
\begin{split}
dY_{t}^{\ast}= & \Big[\pi^{\ast}_{t}\hat{\mu}_{t}-c^{\ast}_{t}-m_{t}(t)Z^{\ast}_{t}-m(t)\delta(t)c^{\ast}_{t}+m(t)\alpha(t)Z^{\ast}_{t}\Big]dt+\pi^{\ast}_{t}\sigma_{S}d\hat{W}_{t}\\
= & \Big[\Big(-m_{t}(t)+m(t)\alpha(t)\Big)Z^{\ast}_{t}-(1+m(t)\delta(t))c^{\ast}_{t}+\frac{\hat{\mu}_{t}^{2}}{(1-p)\sigma_{S}^{2}}Y_{t}^{\ast}\\
+ & \frac{\Big(\hat{\Omega}(t)+\sigma_{S}\sigma_{\mu}\rho\Big)}{\sigma^{2}_{S}}\frac{N_{\eta}}{N}\hat{\mu}_{t}Y_{t}^{\ast}\Big]dt+ \Big[\frac{\hat{\mu}_{t}}{(1-p)\sigma_{S}}+\frac{\Big(\hat{\Omega}(t)+\sigma_{S}\sigma_{\mu}\rho\Big)}{\sigma_{S}}\frac{N_{\eta}}{N}\Big]Y_{t}^{\ast}d\hat{W}_{t}.
\end{split}
\end{equation}
\indent
Using the definition of $m(t)$ by $(\ref{www})$ and substituting $c^{\ast}_{t}$ defined by $(\ref{optc})$ into $(\ref{eqnY})$ above, we will further have
\begin{equation}
\begin{split}
dY_{t}^{\ast}= & \Big[-\frac{\Big(1+\delta(t)m(t)\Big)^{\frac{-p}{1-p}}}{N}+\frac{\hat{\mu}_{t}^{2}}{(1-p)\sigma^{2}_{S}}+\frac{\Big(\hat{\Omega}(t)+\sigma_{S}\sigma_{\mu}\rho\Big)}{\sigma^{2}_{S}}\frac{N_{\eta}}{N}\hat{\mu}_{t}\Big]Y_{t}^{\ast}dt\\
& +\Big[\frac{\hat{\mu}_{t}}{(1-p)\sigma_{S}}+\frac{\Big(\hat{\Omega}(t)+\sigma_{S}\sigma_{\mu}\rho\Big)}{\sigma_{S}}\frac{N_{\eta}}{N}\Big]Y_{t}^{\ast}d\hat{W}_{t}.\nonumber
\end{split}
\end{equation}
\indent
In order to solve $X_{t}^{\ast}$ in a more explicit formula, we define the auxiliary process by
\begin{equation}
\Gamma_{t}=\frac{N(t,\hat{\mu}_{t})}{Y_{t}^{\ast}},\ \ \ \forall 0\leq{t}\leq{T}.\nonumber
\end{equation}
It\^{o}'s lemma implies that
\begin{equation}\label{eqngamma}
\begin{split}
d\Gamma_{t}=& \frac{\Gamma_{t}}{N}\Big[N_{t}-\lambda(\hat{\mu}_{t}-\bar{\mu})N_{\eta}+\frac{\Big(\hat{\Omega}(t)+\sigma_{S}\sigma_{\mu}\rho\Big)^{2}}{2\sigma^{2}_{S}}N_{\eta\eta}+\frac{\hat{\mu}_{t}\Big(\hat{\Omega}(t)+\sigma_{S}\sigma_{\mu}\rho\Big)p}{(1-p)\sigma^{2}_{S}}N_{\eta}\\
& +\Big(1+\delta(t)m(t)\Big)^{\frac{-p}{1-p}}+\frac{p\hat{\mu}_{t}^{2}}{(1-p)^{2}\sigma^{2}_{S}}N\Big]dt+\Gamma_{t}\Big[\frac{-\hat{\mu}_{t}}{(1-p)\sigma_{S}}\Big]d\hat{W}_{t}.
\end{split}
\end{equation}
\indent
Since $N(t,\eta)$ satisfies the linear PDE $(\ref{eqnN})$, $(\ref{eqngamma})$ is simplified as
\begin{equation}
d\Gamma_{t}=\Gamma_{t}\Big[\frac{p\hat{\mu}_{t}^{2}}{2(1-p)^{2}\sigma^{2}_{S}}\Big]dt+\Gamma_{t}\Big[\frac{-\hat{\mu}_{t}}{(1-p)\sigma_{S}}\Big]d\hat{W}_{t}.\nonumber
\end{equation}
Hence, the existence of the unique strong solution of the above SDE is guaranteed 
\begin{equation}
\Gamma_{t}=\Gamma_{0}\exp\Big(-\int_{0}^{t}\frac{\hat{\mu}_{u}^{2}}{2(1-p)\sigma^{2}_{S}}du-\int_{0}^{t}\frac{\hat{\mu}_{u}}{(1-p)\sigma_{S}}d\hat{W}_{u}\Big),\nonumber
\end{equation}
\indent
Initial condition $\Gamma_{0}=\frac{N(0,\eta)}{x_{0}-m(0)z_{0}}>0$ implies that $\Gamma_{t}>0,\ \ \forall \ 0\leq{t}\leq{T}$. Therefore, we finally proved that the SDE $(\ref{sdeX})$ has a unique strong solution defined by $(\ref{optiwealth})$ and the solution $X_{t}^{\ast}$ satisfies the wealth process constraint $(\ref{kyu})$\\
\indent
Next, we proceed to verify the pair $(\pi^{\ast}_{t},c^{\ast}_{t})$ is indeed in the admissible space \ $\mathcal{A}$. \\
\indent
First, by the definition $(\ref{optpi})$ and $(\ref{optc})$, it's clear that $\pi^{\ast}_{t}$ and $c^{\ast}_{t}$ are $\mathcal{F}_{t}^{S}$ progressively measurable, and by the path continuity of $Y_{t}^{\ast}=X_{t}^{\ast}-m(t)Z_{t}^{\ast}$, hence, of $\pi^{\ast}_{t}$ and $c^{\ast}_{t}$, it's easy to show that
\begin{equation}
\int_{0}^{T}(\pi^{\ast}_{t})^{2}dt<+\infty,\ \  \ \ and\ \ \ \ \int_{0}^{T}c^{\ast}_{t}dt<+\infty,\ \ a.s.\nonumber
\end{equation}
\indent
Also, since $X_{t}^{\ast}>m(t)Z_{t}^{\ast},\ \forall t\in[0,T]$, by the definition of $c^{\ast}_{t}$, the consumption constraint $c^{\ast}_{t}>Z^{\ast}_{t}, \ \ \forall t\in[0,T]$ is satisfied. It follows that $(\pi^{\ast}_{t}, c^{\ast}_{t})\in\mathcal{A} $.\\
\indent
Given the pair of control policy $(\pi^{\ast}_{t},c^{\ast}_{t})$ as above, instead of $(\ref{inqV})$, the equality is verified
\begin{equation}
V(0,x_{0},z_{0},\eta_{0};\theta_{0})=\mathbb{E}\Big[\int_{0}^{\tau_{n}\wedge T}\frac{(c^{\ast}_{t}-Z^{\ast}_{t})^{p}}{p}dt\Big]+
\mathbb{E}\Big[V(\tau_{n}\wedge T,X^{\ast}_{\tau_{n}\wedge T},Z^{\ast}_{\tau_{n}\wedge T},\hat{\mu}_{\tau_{n}\wedge T})\Big].\nonumber
\end{equation}
Monotone Convergence Theorem implies
\begin{equation}
\lim_{n\rightarrow+\infty}\mathbb{E}\Big[\int_{0}^{\tau_{n}\wedge T}\frac{(c^{\ast}_{t}-Z^{\ast}_{t})^{p}}{p}dt\Big]=\mathbb{E}\Big[\int_{0}^{T}\frac{(c^{\ast}_{t}-Z^{\ast}_{t})^{p}}{p}dt \Big],\nonumber
\end{equation}
Moreover, when $p<0$, we have that the function $V(t,x,z,\eta)<0$ by it's definition. Fatou's lemma deduces that
\begin{equation}
\underset{n\rightarrow+\infty}{\lim\sup}\mathbb{E}\Big[V(\tau_{n}\wedge T,X^{\ast}_{\tau_{n}\wedge T},Z^{\ast}_{\tau_{n}\wedge T},\hat{\mu}_{\tau_{n}\wedge T})\Big]\leq{}\mathbb{E}\Big[V(T,X^{\ast}_{T},Z^{\ast}_{T},\hat{\mu}_{T})\Big]=\mathbb{E}\Big[\frac{(X^{\ast}_{T})^{p}}{p}\Big].\nonumber
\end{equation}
\indent
Therefore, it follows that
\begin{equation}
V(0,x_{0},z_{0},\eta_{0};\theta_{0})\leq{\mathbb{E}\Big[\int_{0}^{T}\frac{(c^{\ast}_{t}-Z^{\ast}_{t})^{p}}{p}dt+\frac{(X^{\ast}_{T})^{p}}{p}\Big]\leq v(x_{0},z_{0},\eta_{0},\theta_{0})}\nonumber
\end{equation}
which completes the proof. 
\end{proof}

\subsection{The Case: $\mathit{0<p<1}$}
The following two Lemmas play important roles in the proof of the second part of our main result.
\begin{lemma}\label{prop62}
If constant $k>0$ satisfies:
\begin{equation}\label{cond2}
k<\frac{\lambda^{2}\sigma^{2}_{S}}{2(\Theta+\sigma_{S}\sigma_{\mu}\rho)^{2}}
\end{equation}
for any $t\geq 0$, there exists a constant $\Lambda_{1}$ such that
\begin{equation}
\mathbb{E}\Big[\exp\Big(\int_{0}^{t}k\hat{\mu}^{2}_{s}ds\Big)\Big]\leq \Lambda_{1}<+\infty.\nonumber
\end{equation}
\end{lemma}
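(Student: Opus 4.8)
The plan is to estimate this exponential of a quadratic functional of the Gaussian process $\hat{\mu}$ by an exponential--of--quadratic ansatz, thereby reducing the finiteness question to the non-explosion of a scalar Riccati ODE whose solvability is governed precisely by the threshold imposed on $k$. Writing the diffusion coefficient of $(\ref{newmu})$ as $\beta(s)=\frac{\hat{\Omega}(s)+\sigma_{S}\sigma_{\mu}\rho}{\sigma_{S}}$, the monotone bounds $\min(\theta_{0},\theta^{\ast})\le\hat{\Omega}(s)\le\Theta$ yield the uniform control $\beta(s)^{2}\le\beta_{\max}^{2}:=\frac{(\Theta+\sigma_{S}\sigma_{\mu}\rho)^{2}}{\sigma_{S}^{2}}$, and I record that $\hat{\mu}_{0}=\eta_{0}$ is a fixed deterministic constant.

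First I would introduce deterministic functions $g,p,q$ on $[0,t]$ with terminal data $g(t)=p(t)=q(t)=0$ and set
\[
R_{s}=\exp\Big(k\int_{0}^{s}\hat{\mu}_{u}^{2}\,du+g(s)\hat{\mu}_{s}^{2}+p(s)\hat{\mu}_{s}+q(s)\Big).
\]
Applying It\^o's formula to $R_{s}$ via $(\ref{newmu})$ and collecting the coefficients of $\hat{\mu}_{s}^{2}$, $\hat{\mu}_{s}$ and the constant in the drift of $R$, one finds the total drift vanishes identically provided $g$ solves the backward Riccati equation $g'(s)=-k+2\lambda g(s)-2\beta(s)^{2}g(s)^{2}$, together with the ensuing linear ODEs
\[
p'(s)=\big[\lambda-2\beta(s)^{2}g(s)\big]p(s)-2\lambda\bar{\mu}\,g(s),\qquad q'(s)=-\beta(s)^{2}g(s)-\lambda\bar{\mu}\,p(s)-\tfrac{1}{2}\beta(s)^{2}p(s)^{2},
\]
all with zero terminal value at $s=t$. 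With these choices $R$ is a stochastic exponential, hence a nonnegative local martingale.

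The crux, and the only place where the hypothesis on $k$ enters, is to show $g$ does not blow up on $[0,t]$. Passing to the time--to--go $\tau=t-s$, the function $\tilde{g}(\tau)=g(t-\tau)$ solves $\tilde{g}'(\tau)=2\beta(t-\tau)^{2}\tilde{g}^{2}-2\lambda\tilde{g}+k$ with $\tilde{g}(0)=0$; since the right-hand side equals $k>0$ at $\tilde{g}=0$, the solution stays nonnegative, and then $\beta^{2}\le\beta_{\max}^{2}$ gives the differential inequality $\tilde{g}'\le 2\beta_{\max}^{2}\tilde{g}^{2}-2\lambda\tilde{g}+k$. The hypothesis $k<\frac{\lambda^{2}\sigma_{S}^{2}}{2(\Theta+\sigma_{S}\sigma_{\mu}\rho)^{2}}=\frac{\lambda^{2}}{2\beta_{\max}^{2}}$ makes the discriminant $\lambda^{2}-2\beta_{\max}^{2}k$ strictly positive, so the constant--coefficient comparison ODE has two positive equilibria $g_{-}<g_{+}$ and its solution started at $0$ increases monotonically to the smaller root $g_{-}$. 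An ODE comparison then yields $0\le\tilde{g}(\tau)\le g_{-}$ for all $\tau\ge0$, uniformly in $t$; boundedness of $g$ in turn forces $p$ and $q$, being solutions of linear ODEs with bounded coefficients on a compact interval, to be bounded as well.

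Finally, a nonnegative local martingale is a supermartingale, so $\mathbb{E}[R_{t}]\le R_{0}$. Because $g(t)=p(t)=q(t)=0$ we have $R_{t}=\exp\big(k\int_{0}^{t}\hat{\mu}_{u}^{2}\,du\big)$ exactly, whereas $R_{0}=\exp\big(g(0)\eta_{0}^{2}+p(0)\eta_{0}+q(0)\big)$ is a finite deterministic constant by the previous step; setting $\Lambda_{1}:=R_{0}$ establishes the claim. I expect the main obstacle to be the non-explosion of the Riccati equation: dominating the time--dependent coefficient $\beta(s)^{2}$ by its supremum $\beta_{\max}^{2}$ and comparing against the constant--coefficient equation, whose global boundedness is exactly the content of the threshold on $k$, is what makes the estimate go through, and some care is needed to justify the comparison, which hinges on first verifying $\tilde{g}\ge0$.
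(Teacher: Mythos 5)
Your proof is correct, but it takes a genuinely different route from the paper's. The paper approximates $ky^{2}$ by bounded smooth functions $Q_{n}$, represents $\mathbb{E}[\exp(\int_{0}^{t}Q_{n}(\hat{\mu}_{s})ds)]$ as the solution of a linear parabolic PDE via Feynman--Kac, and bounds it by the explicit static supersolution $e^{bt+a\eta^{2}}$ using a parabolic comparison theorem (Pao) plus monotone convergence; the admissible coefficient $a$ exists precisely because the quadratic $\tfrac{2(\hat{\Omega}(t)+\sigma_{S}\sigma_{\mu}\rho)^{2}}{\sigma_{S}^{2}}x^{2}-2\lambda x+k$ has two positive real roots under $(\ref{cond2})$. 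You instead build the exponential-quadratic functional $R_{s}$ into a driftless (hence nonnegative local martingale, hence supermartingale) process and reduce everything to non-explosion of the backward Riccati equation for $g$, controlled by ODE comparison against the constant-coefficient equation — whose discriminant condition is the very same quadratic. So the two arguments rest on identical algebra (your $g$ is trapped below the smaller root; the paper's $a$ sits between the roots), but yours is purely probabilistic: it avoids the $Q_{n}$ approximation, the Feynman--Kac regularity discussion, and the appeal to a parabolic sub/supersolution theorem, at the price of having to verify the vanishing of the drift, the a priori bounds $0\le\tilde{g}\le g_{-}$, and the resulting boundedness of $p$ and $q$; it also yields the explicit constant $\Lambda_{1}=\exp(g(0)\eta_{0}^{2}+p(0)\eta_{0}+q(0))$. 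One shared caveat: both arguments use $(\hat{\Omega}(s)+\sigma_{S}\sigma_{\mu}\rho)^{2}\le(\Theta+\sigma_{S}\sigma_{\mu}\rho)^{2}$, which is automatic for $\rho\ge 0$ but should be noted when $\rho<0$; since the paper makes the same implicit assumption, this is not a defect of your approach specifically.
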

\begin{proof}. Similar to the proof of Lemma $12$ of \cite{MR2086171}. It is easy to choose an increasing sequence of smooth functions $Q_{n}(y)\nearrow ky^{2}$ as $n\rightarrow\infty$ such that $0\leq{Q_{n}(y)}\leq{n}$ with $\Big|Q_{n}'(y)\Big|$ and $\Big|Q_{n}''(y)\Big|$ uniformly bounded. And for each fixed $t\geq 0$ and $\eta$, we define:
\begin{equation}
\phi(t,\eta)=\mathbb{E}\Big[\exp\Big(\int_{0}^{t}Q_{n}(\hat{\mu}_{s})ds\Big)\Big],\nonumber
\end{equation}
where $\hat{\mu}_{0}=\eta$.\\
\indent
Similar to the proof of Feynman-Kac formula, the function $\phi(t,\eta)$ is a classical solution of the linear parabolic equation:
\begin{equation}\label{fphi}
\phi_{t}=\frac{\Big(\hat{\Omega}(t)+\sigma_{S}\sigma_{\mu}\rho\Big)^{2}}{2\sigma^{2}_{S}}\phi_{\eta\eta}-\lambda(\eta-\bar{\mu})\phi_{\eta}+Q_{n}(\eta)\phi,
\end{equation}
with initial condition $\phi(0,\eta)=1$. See also Lemma $1.12$ in \cite{pangt} for details.\\
\indent
First, it's clear that constant $0$ is a subsolution of the above equation. Moreover, under assumption $(\ref{cond2})$, it's easy to show that for each fixed $t\geq 0$, the equation:
\begin{equation}
\frac{2\Big(\hat{\Omega}(t)+\sigma_{S}\sigma_{\mu}\rho\Big)^{2}}{\sigma^{2}_{S}}x^{2}-2\lambda x+k=0\nonumber
\end{equation}
has two positive real roots
\begin{equation}
x_{1}=\frac{\lambda-\sqrt{\lambda^{2}-\frac{2(\hat{\Omega}(t)+\sigma_{S}\sigma_{\mu}\rho)^{2}}{\sigma^{2}_{S}}k}}{2\frac{(\hat{\Omega}(t)+\sigma_{S}\sigma_{\mu}\rho)^{2}}{\sigma^{2}_{S}}},\ \ \textrm{and}\ \ x_{2}=\frac{\lambda+\sqrt{\lambda^{2}-\frac{2(\hat{\Omega}(t)+\sigma_{S}\sigma_{\mu}\rho)^{2}}{\sigma^{2}_{S}}k}}{2\frac{(\hat{\Omega}(t)+\sigma_{S}\sigma_{\mu}\rho)^{2}}{\sigma^{2}_{S}}}.\nonumber
\end{equation}
For any positive constant $a$ such that:
\begin{equation}
0<a<\frac{\lambda+\sqrt{\lambda^{2}-\frac{(\Theta_{2}+\sigma_{S}\sigma_{\mu}\rho)^{2}}{\sigma^{2}_{S}}k}}{2\frac{(\Theta_{1}+\sigma_{S}\sigma_{\mu}\rho)^{2}}{\sigma^{2}_{S}}},\nonumber
\end{equation}
with $\Theta_{1}=\max(\theta,\theta^{\ast})$ and $\Theta_{2}=\min(\theta,\theta^{\ast})$, and the positive constant $b$ such that:
\begin{equation}
b>a\frac{(\Theta_{1}+\sigma_{S}\sigma_{\mu}\rho)^{2}}{\sigma^{2}_{S}}-\frac{\lambda^{2}\bar{\mu}^{2}a^{2}}{2a^{2}\frac{(\Theta_{1}+\sigma_{S}\sigma_{\mu}\rho)^{2}}{\sigma^{2}_{S}}-2a\lambda+k}.\nonumber
\end{equation}
\indent
It's easy to verify that $f(t,\eta)=\exp(bt+a\eta^{2})$ satisfies
\begin{equation}
f_{t}\geq{}\frac{\Big(\hat{\Omega}(t)+\sigma_{S}\sigma_{\mu}\rho\Big)^{2}}{2\sigma^{2}_{S}}f_{\eta\eta}-\lambda(\eta-\bar{\mu})f_{\eta}+k\eta^{2},\nonumber
\end{equation}
with the initial condition $f(0,\eta)\geq{1}$.\\
\indent
Since $Q_{n}(\eta)<k\eta^{2}$, we get function $f(t,\eta)$ is the supersolution of the equation $(\ref{fphi})$, and $\big{\langle}0,f(t,\eta)\big{\rangle}$ is the coupled subsolution and supersolution. Theorem $7.2$ from \cite{Pao} shows that function $\phi(t,\eta)$ satisfies: $0\leq\phi(t,\eta)\leq f(t,\eta)\equiv{\Lambda_{1}}$, and hence Monotone Convergence Theorem leads to
\begin{equation}
\mathbb{E}\Big[\exp\Big(\int_{0}^{t}k\hat{\mu}^{2}_{s}ds\Big)\Big]\leq\Lambda_{1}<+\infty.\nonumber
\end{equation} 
\end{proof}

\begin{lemma}\label{prop63}
If constant $\bar{k}>0$ satisfies
\begin{equation}\label{parassm}
\bar{k}<\frac{\lambda\sigma_{S}^{2}}{(\Theta+\sigma_{S}\sigma_{\mu}\rho)^{2}},
\end{equation}
for any fixed constant $\kappa>0$, there exists a constant $\Lambda_{2}$ independent of $t$ such that
\begin{equation}
\mathbb{E}\Big[\exp\Big(\bar{k}(\hat{\mu}_{t}+\kappa)^{2}\Big)\Big]\leq \Lambda_{2}<\infty,\ \ \ t\in[0,T].\nonumber
\end{equation}
\end{lemma}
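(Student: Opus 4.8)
The plan is to exploit the fact that, for each fixed $t$, the filtered drift $\hat{\mu}_{t}$ is a \emph{Gaussian} random variable, so that the claim reduces to the classical closed-form expression for the exponential moment of the square of a normal variable, together with uniform-in-$t$ control of its mean and variance. Since we are dealing only with the marginal law of $\hat{\mu}_{t}$ at a single time (rather than a path functional as in Lemma $\ref{prop62}$), there is no need for the sub/supersolution machinery; a direct Gaussian computation suffices.

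First I would read off from the explicit representation
\begin{equation}
\hat{\mu}_{t}=e^{-t\lambda}\eta+\bar{\mu}(1-e^{-t\lambda})+\int_{0}^{t}e^{\lambda(u-t)}\frac{\big(\hat{\Omega}(u)+\sigma_{S}\sigma_{\mu}\rho\big)}{\sigma_{S}}d\hat{W}_{u}\nonumber
\end{equation}
that $\hat{\mu}_{t}\sim\mathrm{N}(m_{t},v_{t})$ with deterministic mean $m_{t}=e^{-t\lambda}\eta+\bar{\mu}(1-e^{-t\lambda})$ and variance $v_{t}=\int_{0}^{t}e^{2\lambda(u-t)}\frac{(\hat{\Omega}(u)+\sigma_{S}\sigma_{\mu}\rho)^{2}}{\sigma_{S}^{2}}du$. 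Because $m_{t}$ is a convex combination of $\eta$ and $\bar{\mu}$, it satisfies $|m_{t}+\kappa|\leq\max(|\eta|,|\bar{\mu}|)+\kappa=:\bar{M}$ uniformly in $t$. For the variance I would invoke the a priori bound $\hat{\Omega}(u)\leq\Theta$ recorded after $(\ref{stedst})$, giving $(\hat{\Omega}(u)+\sigma_{S}\sigma_{\mu}\rho)^{2}\leq(\Theta+\sigma_{S}\sigma_{\mu}\rho)^{2}$ and hence
\begin{equation}
v_{t}\leq\frac{(\Theta+\sigma_{S}\sigma_{\mu}\rho)^{2}}{\sigma_{S}^{2}}\int_{0}^{t}e^{2\lambda(u-t)}du=\frac{(\Theta+\sigma_{S}\sigma_{\mu}\rho)^{2}}{\sigma_{S}^{2}}\cdot\frac{1-e^{-2\lambda t}}{2\lambda}\leq\frac{(\Theta+\sigma_{S}\sigma_{\mu}\rho)^{2}}{2\lambda\sigma_{S}^{2}}=:\bar{v}.\nonumber
\end{equation}

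The crucial observation is that the hypothesis $(\ref{parassm})$, namely $\bar{k}<\frac{\lambda\sigma_{S}^{2}}{(\Theta+\sigma_{S}\sigma_{\mu}\rho)^{2}}$, is \emph{exactly} the statement $\bar{k}<\frac{1}{2\bar{v}}$, so that $1-2\bar{k}v_{t}\geq 1-2\bar{k}\bar{v}>0$ for every $t$. Writing $Y=\hat{\mu}_{t}+\kappa\sim\mathrm{N}(m_{t}+\kappa,v_{t})$ and completing the square in the Gaussian integral yields the standard identity
\begin{equation}
\mathbb{E}\big[\exp(\bar{k}Y^{2})\big]=\frac{1}{\sqrt{1-2\bar{k}v_{t}}}\exp\Big(\frac{\bar{k}(m_{t}+\kappa)^{2}}{1-2\bar{k}v_{t}}\Big),\nonumber
\end{equation}
valid precisely because $1-2\bar{k}v_{t}>0$. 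Bounding the prefactor by its value at $v_{t}=\bar{v}$, and bounding the numerator and denominator of the exponent separately via $|m_{t}+\kappa|\leq\bar{M}$ and $1-2\bar{k}v_{t}\geq 1-2\bar{k}\bar{v}$, gives
\begin{equation}
\mathbb{E}\big[\exp(\bar{k}(\hat{\mu}_{t}+\kappa)^{2})\big]\leq\frac{1}{\sqrt{1-2\bar{k}\bar{v}}}\exp\Big(\frac{\bar{k}\bar{M}^{2}}{1-2\bar{k}\bar{v}}\Big)=:\Lambda_{2}<\infty,\nonumber
\end{equation}
which is independent of $t$, as required.

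The computation is essentially mechanical once the Gaussianity is in hand; the only point requiring care — and the place where the precise form of the hypothesis enters — is the uniform lower bound $1-2\bar{k}v_{t}\geq 1-2\bar{k}\bar{v}>0$. This is exactly where I must match the steady-state-type variance bound $\bar{v}=\frac{(\Theta+\sigma_{S}\sigma_{\mu}\rho)^{2}}{2\lambda\sigma_{S}^{2}}$ against the threshold in $(\ref{parassm})$, relying on the monotone boundedness of the Riccati solution $\hat{\Omega}$ so that the variance never exceeds $\bar{v}$. This is what keeps the exponential moment finite and the bound uniform over $t\in[0,T]$ (indeed over all $t\geq 0$).
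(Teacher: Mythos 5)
Your proof is correct, and it takes a genuinely different route from the paper. The paper proves this lemma by the same sub/supersolution machinery it uses for Lemma \ref{prop62}: it truncates $\bar{k}(y+\kappa)^{2}$ by smooth bounded functions $Q_{n}$, represents $\psi(t,\eta)=\mathbb{E}[\exp(Q_{n}(\hat{\mu}_{t}))]$ as a classical solution of the Kolmogorov backward equation $\psi_{t}=\frac{(\hat{\Omega}(t)+\sigma_{S}\sigma_{\mu}\rho)^{2}}{2\sigma_{S}^{2}}\psi_{\eta\eta}-\lambda(\eta-\bar{\mu})\psi_{\eta}$, exhibits an explicit supersolution $f(t,\eta)=\exp(bt+a(\eta+\kappa)^{2})$ with $\bar{k}<a<\frac{\lambda\sigma_{S}^{2}}{(\Theta+\sigma_{S}\sigma_{\mu}\rho)^{2}}$, and concludes via Pao's comparison theorem and monotone convergence. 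You instead observe that $\hat{\mu}_{t}$ is Gaussian for each fixed $t$ (deterministic integrand against $\hat{W}$, deterministic initial value $\eta$) and compute the exponential moment in closed form. Your route is more elementary and, to my mind, more illuminating: it exposes that the hypothesis $\bar{k}<\frac{\lambda\sigma_{S}^{2}}{(\Theta+\sigma_{S}\sigma_{\mu}\rho)^{2}}$ is exactly the condition $\bar{k}<\frac{1}{2\bar{v}}$ for the uniform variance bound $\bar{v}=\frac{(\Theta+\sigma_{S}\sigma_{\mu}\rho)^{2}}{2\lambda\sigma_{S}^{2}}$, which the paper's choice of supersolution exponent obscures. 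What the paper's PDE approach buys is robustness --- it would survive if the drift estimate were not Gaussian (e.g., nonlinear filtering) --- and it runs in parallel with the argument already set up for Lemma \ref{prop62}, where the quantity is a path functional and the Gaussian shortcut is unavailable. One shared caveat: both your argument and the paper's implicitly use $(\hat{\Omega}(u)+\sigma_{S}\sigma_{\mu}\rho)^{2}\leq(\Theta+\sigma_{S}\sigma_{\mu}\rho)^{2}$, which follows from $\hat{\Omega}(u)\leq\Theta$ only when $\hat{\Omega}(u)+\sigma_{S}\sigma_{\mu}\rho\geq 0$ throughout; for strongly negative $\rho$ this deserves a remark, but it is not a defect of your proof relative to the paper's.
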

\begin{proof}.
Similar to the proof of Lemma $\ref{prop62}$, we again construct an increasing sequence of functions $\{Q_{n}(y)\}$ for $n\in\mathbb{N}$ such that
$\lim_{n\rightarrow+\infty}Q_{n}(y)=\bar{k}(y+\kappa)^{2}$. And for each fixed $t\in[0,T]$ and $\eta$, we define:
\begin{equation}
\psi(t,\eta)=\mathbb{E}\Big[\exp\Big(Q_{n}(\hat{\mu}_{t})\Big)\Big],\nonumber
\end{equation}
where $\hat{\mu}_{0}=\eta$.\\
\indent
As a direct corollary of Theorem $5.6.1$ of \cite{FR}, the function $\psi(t,\eta)$ is a classical solution of the linear parabolic equation
\begin{equation}\label{eeekkk}
\psi_{t}=\frac{\Big(\hat{\Omega}(t)+\sigma_{S}\sigma_{\mu}\rho\Big)^{2}}{2\sigma^{2}_{S}}\psi_{\eta\eta}-\lambda(\eta-\bar{\mu})\psi_{\eta},
\end{equation}
with initial condition $\psi(0,\eta)=e^{Q_{n}(\eta)}$.\\
\indent
Under the condition $(\ref{parassm})$, we can choose any constant $a$ such that
\begin{equation}
\bar{k}<a<\frac{\lambda\sigma_{S}^{2}}{(\Theta+\sigma_{S}\sigma_{\mu}\rho)^{2}},\nonumber
\end{equation}
and any positive constant b such that
\begin{equation}
\begin{split}
b> & a\frac{(\Theta+\sigma_{S}\sigma_{\mu}\rho)^{2}}{\sigma^{2}_{S}}+\frac{2a^{2}(\Theta+\sigma_{S}\sigma_{\mu}\rho)^{2}\kappa^{2}}{\sigma_{S}^{2}}+2a\lambda\bar{\mu}\kappa\\
& -\frac{\Big(\frac{2a^{2}\kappa(\Theta+\sigma_{S}\sigma_{\mu}\rho)^{2}}{\sigma_{S}^{2}}-a\lambda\kappa-a\lambda\bar{\mu}\Big)^{2}}{2a^{2}\frac{(\Theta+\sigma_{S}\sigma_{\mu}\rho)^{2}}{\sigma^{2}_{S}}-2a\lambda},\nonumber
\end{split}
\end{equation}
\indent
It is easy to verify that $f(t,\eta)=\exp(bt+a(\eta+\kappa)^{2})$ satisfies
\begin{equation}
f_{t}\geq{}\frac{\Big(\hat{\Omega}(t)+\sigma_{S}\sigma_{\mu}\rho\Big)^{2}}{2\sigma^{2}_{S}}f_{\eta\eta}-\lambda(\eta-\bar{\mu})f_{\eta},\nonumber
\end{equation}
with the initial condition $f(0,\eta)=e^{a(\eta+\kappa)^{2}}\geq \psi(0,\eta)$, hence we get the function $f(t,\eta)$ is the supersolution of the equation $(\ref{eeekkk})$, and it is trivial to show $g(t,\eta)\equiv 0$ is the subsolution, therefore $\big{\langle}0,f(t,\eta)\big{\rangle}$ are the coupled subsolution and supersolution. Again by Theorem $7.2$ from \textit{Pao}\ \cite{Pao}, that function $\psi(t,\eta)$ satisfies: $0\leq\psi(t,\eta)\leq f(t,\eta)\leq e^{bT+a(\eta+\kappa)^{2}}\equiv{\Lambda_{2}}$, hence Monotone Convergence Theorem implies that 
\begin{equation}
\mathbb{E}\Big[\exp\Big(\bar{k}(\hat{\mu}_{t}+\kappa)^{2}\Big)\Big]\leq\Lambda_{2}<+\infty,\ \ \forall t\in[0,T].\nonumber
\end{equation}
\end{proof}

\begin{proof}[(THE PROOF OF THEOREM $\ref{mainmain}$, CONTINUED)] \ \\
\indent
For any pair of admissible control $(\pi_{t},c_{t})\in\mathcal{A}$ and localizing sequence $\tau_n$, similar to the case for $p<0$, we have
\begin{equation}\label{ineq2}
V(0,x_{0},z_{0},\eta_{0})\geq{\mathbb{E}\Big[\int_{0}^{\tau_{n}\wedge T}\frac{(c_{s}-Z_{s})^{p}}{p}ds\Big]+\mathbb{E}\Big[V(\tau_{n}\wedge T,X_{\tau_{n}\wedge T},Z_{\tau_{n}\wedge T},\hat{\mu}_{\tau_{n}\wedge T})\Big]}.
\end{equation}
\indent
Monotone Convergence Theorem first implies that
\begin{equation}
\lim_{n\rightarrow +\infty}\mathbb{E}\Big[\int_{0}^{\tau_{n}\wedge T}\frac{(c_{s}-Z_{s})^{p}}{p}ds\Big]=\mathbb{E}\Big[\int_{0}^{T}\frac{(c_{s}-Z_{s})^{p}}{p}ds\Big].\nonumber
\end{equation}
\indent
For the case $0<p<1$, $V(t,x,z,\eta)\geq 0$ for all $t\in[0,T]$ by the definition $(\ref{solut})$ and $(\ref{kyu})$. Fatou's lemma yields that
\begin{equation}
\lim_{n\rightarrow+\infty}\mathbb{E}\Big[V(\tau_{n}\wedge T,X_{\tau_{n}\wedge T},Z_{\tau_{n}\wedge T},\hat{\mu}_{\tau_{n}\wedge T})\Big]\geq{\mathbb{E}\Big[V(T,X_{T},Z_{T},\hat{\mu}_{T})\Big]}=\mathbb{E}\Big[\frac{X_{T}^{p}}{p}\Big],\nonumber
\end{equation}
which implies that:
\begin{equation}
V(0,x_{0},z_{0},\eta_{0})\geq{\underset{\pi,c\in\mathcal{A}}{\sup}\mathbb{E}\Big[\int_{0}^{T}\frac{(c_{s}-Z_{s})^{p}}{p}ds+\frac{X_{T}^{p}}{p}\Big]}=v(x_{0},z_{0},\eta_{0},\theta_{0}).\nonumber
\end{equation}
\indent
On the other hand, for $\pi^{\ast}_{t}$ and $c^{\ast}_{t}$ defined by $(\ref{optpi})$, $(\ref{optc})$, similar to the case $p<0$, we can show the pair $(\pi^{\ast}_{t},c^{\ast}_{t})$ is in the admissible space \ $\mathcal{A}$. \\
\indent
For the given policies $\pi^{\ast}_{t}$ and $c^{\ast}_{t}$, similarly, we can get the equality:
\begin{equation}
V(0,x_{0},z_{0},\eta_{0})=\mathbb{E}\Big[\int_{0}^{\tau_{n}\wedge T}\frac{(c_{s}^{\ast}-Z_{s}^{\ast})^{p}}{p}ds\Big]+\mathbb{E}\Big[V(\tau_{n}\wedge T,X_{\tau_{n}\wedge T}^{\ast},Z_{\tau_{n}\wedge T}^{\ast},\hat{\mu}_{\tau_{n}\wedge T})\Big].\nonumber
\end{equation}
The definition of $V(t,x,z,\eta)$ deduces that
\begin{equation}\label{now}
V(T\wedge\tau_{n},X_{T\wedge\tau_{n}}^{\ast},Z_{T\wedge\tau_{n}}^{\ast},\hat{\mu}_{T\wedge \tau_{n}})\leq k_{1}\Big[\Big(\frac{Y^{\ast}}{N}\Big)^{2p}_{T\wedge\tau_{n}}+N^{2}(T\wedge\tau_{n},\hat{\mu}_{T\wedge\tau_{n}})\Big]
\end{equation}
for some positive constants $k_{1}$, which are independent of $n$.\\
\indent
For the first term of (\ref{now}), we notice that
\begin{equation}
\begin{split}
\Big(\frac{Y^{\ast}}{N}\Big)^{2p}_{T\wedge\tau_{n}}\leq & \frac{1}{2}\Big[\exp\Big(\int_{0}^{T\wedge\tau_{n}}\frac{4p\hat{\mu}_{u}}{(1-p)\sigma_{S}}d\hat{W}_{u}-\int_{0}^{T\wedge\tau_{n}}\frac{4p^{2}\hat{\mu}^{2}_{u}}{(1-p)^{2}\sigma^{2}_{S}}du\Big)\\
& +\exp\Big(\int_{0}^{T\wedge\tau_{n}}\frac{2(2p^{2}+p)\hat{\mu}^{2}_{u}}{(1-p)^{2}\sigma^{2}_{S}}du\Big)\Big],\nonumber
\end{split}
\end{equation}
hence, we have
\begin{equation}
\begin{split}
\mathbb{E}\Big[\sup_{n}\Big(\frac{Y^{\ast}}{N}\Big)^{2p}_{T\wedge\tau_{n}}\Big]\leq & \frac{1}{2}\mathbb{E}\Big[\sup_{n}\exp\Big(\int_{0}^{T\wedge\tau_{n}}\frac{4p\hat{\mu}_{u}}{(1-p)\sigma_{S}}d\hat{W}_{u}-\int_{0}^{T\wedge\tau_{n}}\frac{4p^{2}\hat{\mu}^{2}_{u}}{(1-p)^{2}\sigma^{2}_{S}}du\Big)\\
& +\sup_{n}\exp\Big(\int_{0}^{T\wedge\tau_{n}}\frac{2(2p^{2}+p)\hat{\mu}^{2}_{u}}{(1-p)^{2}\sigma^{2}_{S}}du\Big)\Big].\nonumber
\end{split}
\end{equation}
Again since $\hat{\mu}_{t}$ follows the dynamics $(\ref{newmu})$, by Bene\v{s}' condition (see Corollary $3.5.14$ and Corollary $3.5.16$ in \cite{MR917065}), we see that the exponential local martingale $M_{t}=\exp\Big(\int_{0}^{t}\frac{2p\hat{\mu}_{u}}{(1-p)\sigma_{S}}d\hat{W}_{u}-\int_{0}^{t}\frac{2p^{2}\hat{\mu}^{2}_{u}}{(1-p)^{2}\sigma^{2}_{S}}du\Big)$ is a true martingale, and hence, by Doob's maximal inequality, we first derive that
\begin{equation}
\begin{split}
& \mathbb{E}\Big[\sup_{n}\exp\Big(\int_{0}^{T\wedge\tau_{n}}\frac{4p\hat{\mu}_{u}}{(1-p)\sigma_{S}}d\hat{W}_{u}-\int_{0}^{T\wedge\tau_{n}}\frac{4p^{2}\hat{\mu}^{2}_{u}}{(1-p)^{2}\sigma^{2}_{S}}du\Big)\Big]\\
\leq & \mathbb{E}\Big[\sup_{t\in[0,T]}\exp\Big(\int_{0}^{t}\frac{4p\hat{\mu}_{u}}{(1-p)\sigma_{S}}d\hat{W}_{u}-\int_{0}^{t}\frac{4p^{2}\hat{\mu}^{2}_{u}}{(1-p)^{2}\sigma^{2}_{S}}du\Big)\Big]\\
\leq & k(p)\mathbb{E}\Big[\exp\Big(\int_{0}^{T}\frac{4p\hat{\mu}_{u}}{(1-p)\sigma_{S}}d\hat{W}_{u}-\int_{0}^{T}\frac{4p^{2}\hat{\mu}^{2}_{u}}{(1-p)^{2}\sigma^{2}_{S}}du\Big)\Big]\nonumber
\end{split}
\end{equation}
where $k(p)$ is a constant depending on $p$. Moreover, similar to the proofs of Corollary $3.5.14$ and Corollary $3.5.16$ in \cite{MR917065}, Corollary $1$ and Corollary $2$ in \cite{MR2003243} further states that the true martingale $M_{t}$ defined as above satisfies the finite moments property, i.e., for any $r>1$, we have $\mathbb{E}\Big[M^{r}_{T}\Big]<\infty$. Therefore it follows that for $r=2$,
\begin{equation}
\mathbb{E}\Big[\exp\Big(\int_{0}^{T}\frac{4p\hat{\mu}_{u}}{(1-p)\sigma_{S}}d\hat{W}_{u}-\int_{0}^{T}\frac{4p^{2}\hat{\mu}^{2}_{u}}{(1-p)^{2}\sigma^{2}_{S}}du\Big)\Big]<\infty.\nonumber
\end{equation}
\indent
For the second part of (\ref{now}), we can apply Assumption $\ref{ass222}$ and Lemma $\ref{prop62}$ to deduce that:
\begin{equation}
\begin{split}
\mathbb{E}\Big[\sup_{n} \exp\Big(\int_{0}^{T\wedge\tau_{n}}\frac{(2p^{2}+2p)\hat{\mu}^{2}_{u}}{(1-p)^{2}\sigma^{2}_{S}}du\Big) \Big] & \leq \mathbb{E}\Big[ \exp\Big(\int_{0}^{T}\frac{(2p^{2}+2p)\hat{\mu}^{2}_{u}}{(1-p)^{2}\sigma^{2}_{S}}du\Big) \Big]\\
& <\Lambda_{1}<+\infty,\nonumber
\end{split}
\end{equation}
for some constant $\Lambda_{1}>0$.\\
\indent
Under the condition that $a(t,s)$, $b(t,s)$, $c(t,s)$, $f(t,s)$ and $g(t,s)$ defined in Lemma $\ref{brene}$ are bounded and $1-a(t,s)\hat{\Omega}(t)\neq 0$ on $0\leq t\leq s\leq T$, functions $A(t,s)$, $B(t,s)$ and $C(t,s)$ are all bounded on $0\leq t\leq s\leq T$. As a simple consequence, there exist constants $k$, $k_1$ such that
\begin{equation}
N(t,\eta)\leq ke^{\bar{K}_{1}(\eta+k_{1})^{2}},\nonumber
\end{equation}
where $A(t,s)\leq \bar{K}_{1}$ for all $0\leq t\leq s\leq T$, and hence we have
\begin{equation}
\underset{n}{\sup}\Big(N^{2}(T\wedge\tau_{n},\hat{\mu}_{T\wedge\tau_{n}})\Big)\leq \underset{t\in[0,T]}{\sup}k^2e^{2\bar{K}_{1}(\hat{\mu}_{t}+k_{1})^{2}}.\nonumber
\end{equation}
\indent
So it is enough to show that
\begin{equation}\label{bbbbd}
\mathbb{E}\Big[\underset{t\in[0,T]}{\sup}e^{2\bar{K}_{1}(\hat{\mu}_{t}+k_{1})^{2}}\Big]<\infty.
\end{equation}
Define $\varphi(x)\triangleq e^{2\bar{K}_{1}(x+k_{1})^{2}}$, It\^{o}'s lemma gives
\begin{equation}
\begin{split}
d\varphi(\hat{\mu}_{t})= & \varphi(\hat{\mu}_{t})\Big[\Big(-4\bar{K}_{1}\lambda+8\bar{K}_{1}^{2}\frac{\Big(\hat{\Omega}(t)+\sigma_{S}\sigma_{\mu}\rho\Big)^{2}}{\sigma_{S}^{2}}\Big)\hat{\mu}_{t}^{2}+\Big(4\bar{K}_1\lambda\bar{\mu}-4\lambda\bar{K}_1k_1\\
&+16\bar{K}_1^2k_1\frac{\Big(\hat{\Omega}(t)+\sigma_S\sigma_{\mu}\rho\Big)^2}{\sigma_S^2}\Big)\hat{\mu}_t+2\bar{K}_1\frac{\Big(\hat{\Omega}(t)+\sigma_S\sigma_{mu}\rho\Big)^2}{\sigma_S^2}+4\lambda\bar{K}_1k_1\bar{\mu}\\
&+8\bar{K}_1^2k_1^2\frac{\Big(\hat{\Omega}(t)+\sigma_S\sigma_{\mu}\rho\Big)^2}{\sigma_S^2}\Big]dt+dL_t.\nonumber
\end{split}
\end{equation}
The condition $(\ref{newc})$ guarantees $-4\bar{K}_{1}\lambda+8\bar{K}_{1}^{2}\frac{\Big(\hat{\Omega}(t)+\sigma_{S}\sigma_{\mu}\rho\Big)^{2}}{\sigma_{S}^{2}}<0$, and hence there exists an upper bound constant $k_{2}>0$ such that
\begin{equation}
d\varphi(\hat{\mu}_{t})\leq \varphi(\hat{\mu}_{t})k_{2}dt+dL_{t},\nonumber
\end{equation}
where the local martingale part is
\begin{equation}
dL_{t}\triangleq \varphi(\hat{\mu}_{t})4\bar{K}_{1}\hat{\mu}_{t}\frac{\Big(\hat{\Omega}(t)+\sigma_{S}\sigma_{\mu}\rho\Big)}{\sigma_{S}}d\hat{W}_{t}.\nonumber
\end{equation}
It follows that
\begin{equation}
\mathbb{E}\Big[\underset{t\in[0,T]}{\sup}\varphi(\hat{\mu}_{t})\Big]\leq \varphi(\eta)+\int_{0}^{T}k_{2}\mathbb{E}\Big[\underset{s\in[0,t]}{\sup}\varphi(\hat{\mu}_{s})\Big]dt+\mathbb{E}\Big[\underset{t\in[0,T]}{\sup}L_{t}\Big],\nonumber
\end{equation}

Burhholder-Davis-Gundy Inequality and Jensen's Inequality induce that
\begin{equation}
\mathbb{E}\Big[\underset{t\in[0,T]}{\sup}L_{t}\Big]\leq k_{3}\Big(\int_{0}^{T}\mathbb{E}\Big[16\bar{K}_{1}^{2}\frac{\Big(\hat{\Omega}(t)+\sigma_{S}\sigma_{\mu}\rho\Big)^{2}}{\sigma_{S}^{2}}\hat{\mu}_{t}^{2}e^{4\bar{K}_{1}(\hat{\mu}_{t}+k_{1})^{2}}\Big]dt\Big)^{\frac{1}{2}},\nonumber
\end{equation}
for some constant $k_3$.

Under the condition $(\ref{newc})$, there exists a constant $\epsilon>0$ such that $4\bar{K}_{1}+\epsilon<\frac{\lambda\sigma_{S}^{2}}{(\Theta+\sigma_{S}\sigma_{\mu}\rho)^{2}}$. By H\"{o}lder's Inequality and choosing the conjugates $q=\frac{4\bar{K}_{1}+\epsilon}{4\bar{K}_{1}}$ and $\frac{1}{q}+\frac{1}{q'}=1$, it follows that 
\begin{equation}
\mathbb{E}\Big[\hat{\mu}_{t}^{2}e^{4\bar{K}_{1}(\hat{\mu}_{t}+k_{1})^{2}}\Big]\leq \Big(\mathbb{E}\Big[\hat{\mu}_{t}^{2q'}\Big]\Big)^{\frac{1}{q'}}\Big(\mathbb{E}\Big[e^{(4\bar{K}_{1}+\epsilon)(\hat{\mu}_{t}+k_{1})^{2}}\Big]\Big)^{\frac{1}{q}}.\nonumber
\end{equation}
Lemma $\ref{prop63}$ implies the existence of a constant $\Lambda_{2}$ independent of $t$ such that
\begin{equation}
\mathbb{E}\Big[e^{(4\bar{K}_{1}+\epsilon)(\hat{\mu}_{t}+k_{1})^{2}}\Big]\leq \Lambda_{2}<\infty, \ \forall t\in[0,T],\nonumber
\end{equation}
Again by the fact that there exist positive constants $l$ and $l_{1}>1$ large enough such that
\begin{equation}
\underset{t\in[0,T]}{\sup}\hat{\mu}_{t}\leq{l+\underset{t\in[0,T]}{\sup}l_{1}\hat{W}_{t}},\ \ \ t\in[0,T],\nonumber
\end{equation}
we can obtain
\begin{equation}
\int_{0}^{T}\Big(\mathbb{E}\Big[\hat{\mu}_{t}^{2q'}\Big]\Big)^{\frac{1}{q'}}dt\leq T\Big(\mathbb{E}\Big[(l+\underset{t\in[0,T]}{\sup}l_{1}\hat{W}_{t})^{2q'}\Big]\Big)^{\frac{1}{q'}}<\infty\nonumber
\end{equation}
due to the distribution of running maximum of the Brownian motion $\hat{W}_{t}$. Hence we get the boundedness of $\mathbb{E}\Big[\underset{t\in[0,T]}{\sup}L_{t}\Big]\leq k_{4}<\infty$ for some constant $k_{4}$, and
\begin{equation}
\mathbb{E}\Big[\underset{t\in[0,T]}{\sup}\varphi(\hat{\mu}_{t})\Big]\leq \varphi(\eta)+\int_{0}^{T}k_{2}\mathbb{E}\Big[\underset{s\in[0,t]}{\sup}\varphi(\hat{\mu}_{s})\Big]dt+k_{4}.\nonumber
\end{equation}
The Gronwall's Inequality verifies $(\ref{bbbbd})$.\\
\indent
Therefore, putting all pieces together, we eventually derived that
\begin{equation}
\mathbb{E}\Big[\sup_{n}V(\tau_{n}\wedge T,X_{\tau_{n}\wedge T},Z_{\tau_{n}\wedge T},\hat{\mu}_{\tau_{n}\wedge T})\Big]<\infty
\end{equation}
and Dominated Convergence Theorem leads to
\begin{equation}
\lim_{n\rightarrow\infty}\mathbb{E}\Big[V(\tau_{n}\wedge T,X_{\tau_{n}\wedge T},Z_{\tau_{n}\wedge T},\hat{\mu}_{\tau_{n}\wedge T})\Big]=\mathbb{E}\Big[\frac{(X^{\ast}_{T})^{p}}{p}\Big].\nonumber
\end{equation}
Together with Monotone Convergence Theorem, we deduce
\begin{equation}
V(0,x_{0},z_{0},\eta_{0};\theta_{0})=\mathbb{E}\Big[\int_{0}^{T}\frac{(c^{\ast}_{s}-Z^{\ast}_{s})^{p}}{p}ds+\frac{(X^{\ast}_{T})^{p}}{p}\Big]\leq{v(x_{0},z_{0},\eta_{0},\theta_{0})},\nonumber
\end{equation}
which completes the proof.
\end{proof}

\appendix
\section{Fully Explicit Solutions to The Auxiliary ODEs}\label{ApA}
Following the arguments by \cite{Kim}, we can even solve the auxiliary ODEs $(\ref{eqna})$, $(\ref{eqnb})$, $(\ref{eqnc})$, $(\ref{eqnL})$ and $(\ref{eqnF})$ fully explicitly depending on the risk aversion constant $p$ and all the market coefficients $\sigma_{S}$, $\sigma_{\mu}$, $\lambda$, $\rho$:
\subsection{The Normal Solution}
The condition for the Normal solution is
\begin{equation}\label{confffff}
\Delta\triangleq\lambda^{2}-\frac{2\lambda p\rho\sigma_{\mu}}{(1-p)\sigma^{2}_{S}}-\frac{p\sigma^{2}_{\mu}}{(1-p)\sigma^{2}_{S}}>0,
\end{equation}
and then we define
\begin{equation}
\xi \triangleq\sqrt{\Delta}=\sqrt{\gamma^{2}_{2}-\gamma_{1}\gamma_{3}},\ \ \ \ \ \gamma_{1}\triangleq\frac{(1-p+p\rho^{2})}{1-p}\sigma_{\mu}^{2},\nonumber
\end{equation}
\begin{equation}
\gamma_{2} \triangleq -\lambda+\frac{p\rho\sigma_{\mu}}{(1-p)\sigma_{S}},\ \ \ \ \ \ \ \ \gamma_{3}\triangleq \frac{p}{(1-p)\sigma^{2}_{S}},\nonumber
\end{equation}
\begin{equation}
\xi_{1} \triangleq \frac{\sqrt{(1-\rho^{2})\sigma^{2}_{\mu}+(\lambda\sigma_{S}+\rho\sigma_{\mu})^{2}}}{\sigma_{S}}.\nonumber
\end{equation}
\indent
We can solve the equations  $(\ref{eqna})$, $(\ref{eqnb})$, $(\ref{eqnc})$, $(\ref{eqnL})$ and $(\ref{eqnF})$ as:
\begin{equation}
\begin{split}
a(t,s)= & \frac{p(1-e^{2\xi(t-s)})}{2(1-p)\sigma_{S}^{2}\Big[2\xi-(\xi+\gamma_{2})(1-e^{2\xi(t-s)})\Big]},\\
b(t,s)= & \frac{p\lambda\bar{\mu}(1-e^{\xi(t-s)})^{2}}{2(1-p)\sigma_{S}^{2}\xi\Big[2\xi-(\xi+\gamma_{2})(1-e^{2\xi(t-s)})\Big]},\\
c(t,s)= & \frac{p}{2(1-p)\sigma_{S}^{2}}\Big(\frac{\lambda^{2}\bar{\mu}^{2}}{\xi^{2}}-\frac{\sigma_{\mu}^{2}}{\xi+\gamma_{2}}\Big)(s-t)\\
& +\frac{p\lambda^{2}\bar{\mu}^{2}\Big[(\xi+2\gamma_{2})e^{2\xi(t-s)}-4\gamma_{2}e^{\xi(t-s)}+2\gamma_{2}-\xi\Big]}{2(1-p)\sigma_{S}^{2}\xi^{3}\Big[2\xi-(\xi+\gamma_{2})\Big(1-e^{2\xi(t-s)}\Big)\Big]}\\
& -\frac{p\sigma_{\mu}^{2}}{2(1-p)\sigma_{S}^{2}(\xi^{2}-\gamma_{2}^{2})}\log\Big|\frac{2\xi-(\xi+\gamma_{2})(1-e^{2\xi(t-s)})}{2\xi}\Big|,\nonumber
\end{split}
\end{equation}
\begin{equation}
\begin{split}
f(t,s)= & -\frac{1}{2\sigma_{S}}\frac{1-e^{2\xi_{1}(t-s)}}{(\sigma_{S}\xi_{1}+\lambda\sigma_{S}+\rho\sigma_{\mu})+(\sigma_{S}\xi_{1}-\lambda\sigma_{S}-\rho\sigma_{\mu})e^{2\xi_{1}(t-s)}},\\
g(t,s)= & \frac{1}{2}\log\Big(\frac{(\sigma_{S}\xi_{1}+\lambda\sigma_{S}+\rho\sigma_{\mu})+(\sigma_{S}\xi_{1}-\lambda\sigma_{S}-\rho\sigma_{\mu})e^{2\xi_{1}(t-s)}}{2\sigma_{S}\xi_{1}e^{\xi_{1}(t-s)}}\Big)\\
& -\frac{(1-p)(1-\rho^{2})}{2(1-p+p\rho^{2})}\log\Big(\frac{(\sigma_{S}\xi+\lambda\sigma_{S}+
\frac{\rho\sigma_{\mu}p}{1-p})+(\sigma_{S}\xi-\lambda\sigma_{S}-\frac{\rho\sigma_{\mu}p}{1-p})e^{2\xi(t-s)}}{2\sigma_{S}\xi e^{\xi(t-s)}} \Big)\\
& -\frac{\rho^{2}\lambda(s-t)}{2(1-p+p\rho^{2})}-\frac{\rho\sigma_{\mu}(s-t)}{2(1-p+p\rho^{2})\sigma_{S}},\nonumber
\end{split}
\end{equation}
\indent
The condition for the bounded Normal solution is
\begin{equation}\label{bod}
\gamma_{3}>0,\ \ \textrm{or}\ \ \gamma_{1}>0,\ \ \textrm{or}\ \ \gamma_{2}<0.
\end{equation}
\indent
The condition for the explosive solution and the critical point is
\begin{equation}
\gamma_{3}<0,\ \ \gamma_{1}<0,\ \ \textrm{and}\ \ \gamma_{2}>0,\ \ \ s-t=\frac{1}{2\xi}\log\Big(\frac{\gamma_{2}+\xi}{\gamma_{2}-\xi}\Big).\nonumber
\end{equation}
\begin{remark}\label{remapp}
If $p<0$, the conditions $(\ref{confffff})$ and $(\ref{bod})$ clearly hold, and we have $a(t,s)\leq{0}$ is a bounded solution as well as $1-2a(t,s)\hat{\Omega}(t)>1>0$ and $1-f(t,s)\hat{\Omega}(t)>1>0$. Hence we can conclude the solutions of ODEs $(\ref{odeA})$, $(\ref{odeB})$, $(\ref{odeC})$ are all bounded on $0\leq t\leq s\leq T$. We also notice that $A(t,s)=\frac{a(t,s)}{(1-p)(1-2a(t,s)\hat{\Omega}(t))}\leq{0}$, on $0\leq t\leq s\leq T$.
\end{remark}
\subsection{The Hyperbolic Solution}
The condition for the Hyperbolic solution is
\begin{equation}
\Delta\triangleq\lambda^{2}-\frac{2\lambda p\rho\sigma_{\mu}}{(1-p)\sigma^{2}_{S}}-\frac{p\sigma^{2}_{\mu}}{(1-p)\sigma^{2}_{S}}=0,\nonumber
\end{equation}
together with
\begin{equation}
\gamma_{2}\triangleq -\lambda+\frac{p\rho\sigma_{\mu}}{(1-p)\sigma_{S}}\neq 0,\nonumber
\end{equation}
\indent
Then we can solve $(\ref{eqna})$, $(\ref{eqnb})$, $(\ref{eqnc})$, $(\ref{eqnL})$ and $(\ref{eqnF})$ as:
\begin{equation}
\begin{split}
a(t,s)= & \frac{-1}{2\gamma_{1}(s-t-\frac{1}{\gamma_{2}})}-\frac{\gamma_{2}}{2\gamma_{1}},\\
b(t,s)= & -\frac{\lambda\bar{\mu}}{2\gamma_{1}\gamma_{2}(s-t-\frac{1}{\gamma_{2}})}-\frac{\gamma_{2}\lambda\bar{\mu}(s-t+\frac{1}{\gamma_{2}})}{2\gamma_{1}},\\
c(t,s)= & \frac{\gamma_{2}\sigma_{\mu}^{2}(s-t)}{2\gamma_{1}}+\frac{\lambda^{2}\bar{\mu}^{2}\gamma_{2}^{2}(s-t-\frac{4}{\gamma_{2}})(s-t)^{3}}{24\gamma_{1}(s-t-\frac{1}{\gamma_{2}})}+\frac{\sigma_{\mu}^{2}\log\Big|\frac{1}{2}(s-t)\gamma_{2}-1\Big|}{\gamma_{1}},\nonumber
\end{split}
\end{equation}
\begin{equation}
\begin{split}
f(t,s)= & -\frac{1}{2\sigma_{S}}\frac{1-e^{2\xi_{1}(t-s)}}{(\sigma_{S}\xi_{1}+\lambda\sigma_{S}+\rho\sigma_{\mu})+(\sigma_{S}\xi_{1}-\lambda\sigma_{S}-\rho\sigma_{\mu})e^{2\xi_{1}(t-s)}},\\
g(t,s)= & \frac{1}{2}\log\Big(\frac{(\sigma_{S}\xi_{1}+\lambda\sigma_{S}+\rho\sigma_{\mu})+(\sigma_{S}\xi_{1}-\lambda\sigma_{S}-\rho\sigma_{\mu})e^{2\xi_{1}(t-s)}}{2\sigma_{S}\xi_{1}e^{\xi_{1}(t-s)}}\Big)-\frac{(\lambda\sigma_{S}+\rho\sigma_{\mu})}{2\sigma_{S}}(s-t)\\
& +\frac{\sigma_{\mu}^{2}(1-\rho^{2})}{2\gamma_{1}}\Big[\log\Big|1+\gamma_{2}(t-s)\Big|-\gamma_{2}(s-t)\Big].\nonumber
\end{split}
\end{equation}
\indent
The condition for the bounded Hyperbolic solution is $\gamma_{2}<0$.

The condition for the explosive solution and the critical point is
\begin{equation}
\gamma_{2}>0, \ \ \ \textrm{and}\ \ \ s-t=\frac{1}{\gamma_{2}}.\nonumber
\end{equation}
\subsection{The Polynomial solution}
The condition for the Polynomial solution is
\begin{equation}
\Delta\triangleq\lambda^{2}-\frac{2\lambda p\rho\sigma_{\mu}}{(1-p)\sigma^{2}_{S}}-\frac{p\sigma^{2}_{\mu}}{(1-p)\sigma^{2}_{S}}=0,\nonumber
\end{equation}
together with
\begin{equation}
\gamma_{2}\triangleq -\lambda+\frac{p\rho\sigma_{\mu}}{(1-p)\sigma_{S}}= 0,\nonumber
\end{equation}
\indent
Then we can solve $(\ref{eqna})$, $(\ref{eqnb})$, $(\ref{eqnc})$, $(\ref{eqnL})$ and $(\ref{eqnF})$ as:
\begin{equation}
\begin{split}
a(t,s)= & \frac{p}{2(1-p)\sigma_{S}^{2}}(s-t),\\
b(t,s)= & \frac{p}{2(1-p)\sigma_{S}^{2}}\lambda\bar{\mu}(s-t)^{2},\\
c(t,s)= & -\frac{p}{4(1-p)\sigma_{S}^{2}}\sigma_{\mu}^{2}(s-t)^{2}+\frac{p}{6(1-p)\sigma_{S}^{2}}\lambda^{2}\bar{\mu}^{2}(s-t)^{3},\nonumber
\end{split}
\end{equation}
\begin{equation}
\begin{split}
f(t,s)= & -\frac{1}{2\sigma_{S}}\frac{1-e^{2\xi_{1}(t-s)}}{(\sigma_{S}\xi_{1}+\lambda\sigma_{S}+\rho\sigma_{\mu})+(\sigma_{S}\xi_{1}-\lambda\sigma_{S}-\rho\sigma_{\mu})e^{2\xi_{1}(t-s)}},\\
g(t,s)= & \frac{1}{2}\log\Big(\frac{(\sigma_{S}\xi_{1}+\lambda\sigma_{S}+\rho\sigma_{\mu})+(\sigma_{S}\xi_{1}-\lambda\sigma_{S}-\rho\sigma_{\mu})e^{2\xi_{1}(t-s)}}{2\sigma_{S}\xi_{1}e^{\xi_{1}(t-s)}}\Big) -\frac{(\lambda\sigma_{S}+\rho\sigma_{\mu})}{2\sigma_{S}}(s-t)\\
& -\frac{\sigma_{\mu}^{2}(1-\rho^{2})p}{4(1-p)\sigma_{S}^{2}}(s-t)^{2}.\nonumber
\end{split}
\end{equation}
\indent
All Polynomial solutions are bounded.
\subsection{The Tangent solution}
The condition for the Tangent solution is
\begin{equation}
\Delta\triangleq\lambda^{2}-\frac{2\lambda p\rho\sigma_{\mu}}{(1-p)\sigma^{2}_{S}}-\frac{p\sigma^{2}_{\mu}}{(1-p)\sigma^{2}_{S}}<0,\nonumber
\end{equation}
Now, we define
\begin{equation}
\zeta\triangleq \sqrt{-\Delta},\ \ \ \varpi\triangleq \tan^{-1}\Big(\frac{\gamma_{2}}{\zeta}\Big),\nonumber
\end{equation}
\indent
Then we can solve $(\ref{eqna})$, $(\ref{eqnb})$, $(\ref{eqnc})$, $(\ref{eqnL})$ and $(\ref{eqnF})$ as:
\begin{equation}
\begin{split}
a(t,s)= & \frac{\zeta}{2\gamma_{1}}\tan\Big(\zeta(s-t)+\varpi\Big)-\frac{\gamma_{2}}{2\gamma_{1}},\\
b(t,s)= & \frac{\lambda\bar{\mu}}{\gamma_{1}}\Big[-1-\tan(\varpi)\tan(\zeta(s-t)+\varpi)+\sec(\varpi)\sec(\zeta(s-t)+\varpi)\Big],\\
c(t,s)= & \frac{2\lambda^{2}\bar{\mu}^{2}\gamma_{2}\sqrt{\gamma_{2}^{2}+\zeta^{2}}}{2\gamma_{1}\zeta}\Big[\sec{(\varpi)}-\sec{(\zeta(s-t)+\varpi)}\Big]\\
& + \frac{\lambda^{2}\bar{\mu}^{2}(2\gamma_{2}+\zeta^{2})}{2\gamma_{1}\zeta^{3}}\Big[\tan(\zeta(s-t)+\varpi)-\tan(\varpi)\Big]\\
& -\frac{\lambda^{2}\bar{\mu}^{2}(\gamma_{2}^{2}+\zeta^{2})-\gamma_{2}\zeta^{2}\sigma_{\mu}^{2}}{2\gamma_{1}\zeta^{2}}+\frac{\sigma_{\mu}^{2}}{2\gamma_{1}}\log\Big(\sec(\varpi)\cos(\zeta(s-t)+\varpi)\Big),\nonumber
\end{split}
\end{equation}
\begin{equation}
\begin{split}
f(t,s)= & -\frac{1}{2\sigma_{S}}\frac{1-e^{2\xi_{1}(t-s)}}{(\sigma_{S}\xi_{1}+\lambda\sigma_{S}+\rho\sigma_{\mu})+(\sigma_{S}\xi_{1}-\lambda\sigma_{S}-\rho\sigma_{\mu})e^{2\xi_{1}(t-s)}},\\
g(t,s)= & \frac{1}{2}\log\Big(\frac{(\sigma_{S}\xi_{1}+\lambda\sigma_{S}+\rho\sigma_{\mu})+(\sigma_{S}\xi_{1}-\lambda\sigma_{S}-\rho\sigma_{\mu})e^{2\xi_{1}(t-s)}}{2\sigma_{S}\xi_{1}e^{\xi_{1}(t-s)}}\Big)  -\frac{(\lambda\sigma_{S}+\rho\sigma_{\mu})}{2\sigma_{S}}(s-t)\\
&-\sigma_{\mu}^{2}(1-\rho^{2})\Big[\frac{1}{2\gamma_{1}}\log\Big|\frac{\cos(\zeta(t-s)+\varpi)}{\cos(\varpi)}\Big|-\frac{\gamma_{2}}{2\gamma_{1}}(s-t)\Big].\nonumber
\end{split}
\end{equation}
\indent
All Tangent solutions are explosive solutions and the critical point is
\begin{equation}
s-t=\frac{\pi}{2\zeta}-\frac{1}{\zeta}\tan^{-1}(\frac{\gamma_{2}}{\zeta}).\nonumber
\end{equation}

\bibliography{diss}
\bibliographystyle{amsplain}

\end{document}